\newif\ifDraft
\newif\ifSubmission
\newif\ifProceeding
\newif\ifFull
\newif\ifArxiv
\let\oldendproof\endproof
\def\endproof{\qed\oldendproof}
\newif\ifUseAppendix
\newif\ifInAppendix
\newif\ifDeferToAppendix
\newcounter{mycommentcounter}
\newcommand{\Comment}[2][Comment]{\refstepcounter{mycommentcounter}%
    \ifhmode%
     \unskip%
     {\dimen1=\baselineskip \divide\dimen1 by 2 %
       \raise\dimen1\llap{\tiny
	{-\themycommentcounter-}}}\fi%
     \marginpar[{\renewcommand{\baselinestretch}{0.8}%
       \hspace*{-2em}\begin{minipage}{12em}\footnotesize%
[\themycommentcounter]:%
\raggedright \underline{#1}: #2\end{minipage}}]{\renewcommand{\baselinestretch}{0.8}%
       \begin{minipage}{12em}\footnotesize%
[\themycommentcounter]: \raggedright%
\underline{#1}: #2\end{minipage}}%
}
\newcommand{\Comment}[2][Comment]{\relax}        
\newcommand{\highlight}[1]{{\bfseries\itshape #1}}
\title{Lombardi Drawings of Graphs}
\author{Christian A. Duncan\inst{1} \and
  David Eppstein\inst{2} \and
  Michael T. Goodrich\inst{2} \and \\
  Stephen G. Kobourov\inst{3} \and
  Martin N\"ollenburg\inst{2}}
\institute{\noindent
\inst{1}Department of Computer Science, Louisiana Tech. Univ., Ruston, Louisiana, USA\\
\inst{2}Department of Computer Science, University of California, Irvine, California, USA\\
\inst{3}Department of Computer Science, University of Arizona, Tucson, Arizona, USA}
\begin{document}

\maketitle

\begin{abstract}
We introduce the notion of Lombardi graph drawings, named after the American abstract artist Mark Lombardi. 
In these drawings, edges are represented as circular arcs rather than 
as line segments or polylines, and the vertices have 
\emph{perfect angular resolution}: the edges are equally spaced 
around each vertex. 
We describe algorithms for finding Lombardi drawings of regular graphs, 
graphs of bounded degeneracy, and certain families of planar graphs.
\end{abstract}

\vspace{-1cm}

\section{Introduction}
\label{sec:intro}
The American artist Mark Lombardi~\cite{lh-mlgn-03} was famous for his drawings of social networks representing conspiracy theories. 
Lombardi used curved arcs to represent edges, leading to a strong aesthetic quality and high readability.
Inspired by this work,
we introduce the notion of a \emph{Lombardi drawing} of a graph, in which 
edges are drawn as circular arcs with \emph{perfect angular resolution}: consecutive edges are evenly spaced around each vertex.
While not all vertices have perfect angular resolution 
in Lombardi's work, the even spacing of edges 
around vertices is clearly one of his 
aesthetic criteria; see Fig.~\ref{fig:real}.

Traditional graph drawing methods rarely guarantee perfect 
angular resolution, but poor edge distribution can nevertheless 
lead to unreadable drawings.
Additionally, while some tools provide options to draw edges 
as curves, most rely on straight-line edges, and 
it is known that maintaining good angular resolution can result 
in exponential drawing area for straight-line drawings of 
planar graphs~\cite{gt-pdara-94,mp-arpg-94}. 
Our requirement of perfect angular resolution forces us to use curved edges, 
since even very simple graphs such as 
cycle graphs cannot be drawn with perfect 
angular resolution and straight edges.

\begin{figure}[hbt!]
  \centering
  \includegraphics[width=4.6in]{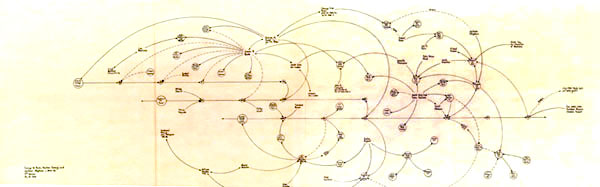}
  \vspace*{-6pt}
  \caption{Mark Lombardi, \textit{George W. Bush, Harken Energy, 
  and Jackson Stevens c.1979-90}, 1999. 
  Graphite on paper, $20 \times 44$ inches~\cite{lh-mlgn-03}.}
  \label{fig:real}
\end{figure}

\ifFull
\subsection{New Results}
\else
\paragraph{New Results.}
\fi
We define a \highlight{Lombardi drawing} of a graph $G$ to be a drawing of 
$G$ in the plane in which vertices are represented as points (or as disks or labels centered on those points), edges are represented as line segments or circular arcs between their endpoints, and every vertex has perfect angular resolution, as measured by the angle formed by the tangents to the edges at the vertex.
We do not necessarily insist that the drawings are free of crossings; the drawings of Lombardi had crossings, sometimes even in cases where they could have been avoided. We also do not consider crossings when we measure the angular resolution of a drawing.
However, we do require that the only vertices that intersect the arc for an edge $(u,v)$ are its two endpoints $u$ and $v$.

Several of Mark Lombardi's drawings used a circle as their overall shape.
We define a \highlight{circular Lombardi drawing} to be a Lombardi
drawing in which the vertices lie on a circle.
\ifFull
It is almost equivalent to ask for a Lombardi drawing in which the vertices lie on a straight line, as circles and straight lines can be transformed into each other (preserving circularity of arcs and local angular resolution) by a M\"obius transformation; the only difference is that vertices on a circle can be connected by a cycle of edges that lie entirely on the circle while vertices on a line cannot.
\fi
Similarly, we define a \highlight{$k$-circular Lombardi drawing} to be
a Lombardi drawing in which the vertices lie on
$k$ concentric circles.
\ifFull
As can be seen from Fig.~\ref{fig:real}, Mark Lombardi often
used the $x$-coordinates of vertices to convey extra information
such as a timeline, so circular Lombardi drawings (transformed to straighten the circle containing the vertices) may be of interest in graph drawing applications in which an additional dimension such as time is to be visualized.
\fi
We provide the following:
\begin{itemize}
\ifProceeding
\setlength{\itemsep}{0cm}
\fi
\item We characterize the regular graphs that have circular Lombardi drawings, and we find efficient algorithms for constructing these drawings.
\item We describe methods of finding Lombardi drawings for any $2$-degenerate graph (a graph that may be reduced to the empty graph by repeated removal of vertices of degree at most $2$) and many but not all $3$-degenerate graphs.
\item We investigate the graphs that have planar Lombardi drawings. We show that certain subclasses of the planar graphs always have such drawings, but that there exist planar graphs with no planar Lombardi drawing.
\item We implement an algorithm for constructing $k$-circular Lombardi drawings and use it to draw many symmetric graphs.
\end{itemize}

\ifFull
\subsection{Related Work}
\else
\paragraph{Related Work.}
\fi
Although most previous work on angular resolution concerns straight-line 
drawings (e.g., see~\cite{dv-aptg-96,gt-pdara-94,mp-arpg-94})
or polyline drawings (e.g., see~\cite{gm-ppdga-98,k-dpguc-96}), 
the angular resolution of drawings with circular-arc edges was previously studied by Cheng {\em et al.}~\cite{cdgk-dpg-01}, who showed that maintaining bounded angular resolution in planar drawings may require exponential area even with circular-arc edges.
Our circular Lombardi drawings use a circular layout of vertices that is 
already popular (e.g., see~\cite{bb-crcl-05,gk-icl-07,st-fcdn-99}). 
However, previous methods for circular layouts draw edges 
as straight line segments or curves perpendicular to the circle, 
neither of which leads to good angular resolution.

Efrat {\em et al.}~\cite{eek-flc-07} show that given a fixed placement of the vertices of a planar graph, determining whether the edges can be drawn with circular arcs so that there are no crossings is NP-Complete. 
For fixed position drawings with cubic B\'ezier curves, 
Brandes {\em et al.}~\cite{bs-adctd-07,DBLP:journals/jgaa/BrandesW00} use
force-directed algorithms to maximize the angular resolution 
and Brandes, Shubina, and Tamassia~\cite{bst-iarvgn-00} 
rotate optimal angular resolution templates.
Aicholzer {\em et al.}~\cite{aaadj-at-10} show that, for a given embedded planar triangulation with fixed vertex positions, one can find a circular-arc 
drawing of the triangulation that maximizes the minimum angular resolution 
by solving a linear program.
Finkel and Tamassia~\cite{ft-cgduf-04} also 
try to optimize angular resolution using force-directed methods 
for laying out graphs with curved edges.
Di~Battista
and Vismara~\cite{dv-aptg-96} 
give a nonlinear optimization characterization that
can find straight-line drawings of embedded planar graphs
with a prescribed assignment of angles if such drawings exist.

Any tree may be drawn with straight edges and perfect angular resolution. 
However, in a separate paper~\cite{degkn-dtwpa-10}, 
we show that (when the order of the edges is fixed around each vertex) 
straight-line tree drawings with perfect angular resolution 
may require exponential area, whereas Lombardi drawings can achieve polynomial area.

\ifProceeding
\vspace{-0.3cm}
\fi

\section{Circular Lombardi Drawings of Regular Graphs}
\label{sec:regular}

We begin by investigating \emph{circular Lombardi drawings}, Lombardi drawings in which all vertices are placed on a circle. As we show, drawings of this type exist for many regular graphs.
Our proofs use the following basic geometric observation:
\ifProceeding
\vspace{-1pt}
\fi
\begin{property}
\label{prop:circleAngle}
Let $A$ be a circular arc or line segment connecting two points $p$ and $q$ that both lie on circle $O$. Then $A$ makes the same angle to $O$ at $p$ that it makes at $q$.
Moreover, for any $p$ and $q$ on $O$ and any angle $0 \le\theta\le \pi$, there exists an arc, line segment, or pair of collinear rays $A$ connecting $p$ and $q$, making angle $\theta$ with $O$, and lying either inside or outside of $O$.
\end{property}
\ifProceeding
\vspace{-2pt}
\fi
The case of two collinear rays is problematic (we only allow edges to be represented by arcs or line segments) but easily avoided by perturbing the vertices on~$O$.

\ifProceeding
\vspace{-5pt}
\fi
\begin{lemma}
\label{lem:regular}
A $d$-regular graph $G$ has a circular Lombardi drawing if and only if $G$ can be decomposed into a disjoint union of $1$-regular and $2$-regular graphs and one of the following conditions is true:
$d\not\equiv 2\pmod{4}$,
one of the $2$-regular subgraphs is bipartite, or
one of the $2$-regular subgraphs is a Hamiltonian cycle.
\end{lemma}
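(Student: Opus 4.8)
The plan is to translate the geometric requirement into a purely combinatorial condition about ``spoke assignments'', analyze that condition with a little modular arithmetic plus Petersen's $2$‑factor theorem, and then check that a combinatorially consistent assignment can always be turned into an actual drawing via Property~\ref{prop:circleAngle}.

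First I would set up the local picture. Orient $O$ counterclockwise; at a vertex $v$ on $O$ let $\beta_v$ record (modulo $2\pi/d$) the direction, relative to the counterclockwise tangent at $v$, of one of the $d$ equally spaced edge tangents, so that each half‑edge $e$ at $v$ gets a ``spoke'' $s_v(e)\in\mathbb{Z}_d$ and the tangent of $e$ at $v$ is $\beta_v+\tfrac{2\pi}{d}s_v(e)$ up to a fixed offset. Reflecting $O$ across the diameter perpendicular to a chord $uv$ extends Property~\ref{prop:circleAngle} to show that the tangents of an edge $e=uv$ at its two endpoints, measured toward the opposite endpoint, sum to $\pi$; this is the single linking identity
\[
\beta_u+\beta_v+\tfrac{2\pi}{d}\bigl(s_u(e)+s_v(e)\bigr)\equiv\pi \pmod{2\pi}.
\]
Conversely I would argue this identity is essentially sufficient: choose generic vertex positions and, for each edge, use Property~\ref{prop:circleAngle} to draw an arc with the prescribed tangent at one endpoint — the tangent at the other endpoint is then automatically correct, and genericity keeps arcs off the remaining vertices. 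The lone exception is an edge whose prescribed tangent is along $O$ at \emph{both} ends: the only such curve is an arc of $O$ itself, which pins down the cyclic order of its endpoints on $O$.

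Next I would analyze the combinatorics. Setting $\gamma_v=\beta_v-\pi/2$, the identity becomes $\gamma_u+\gamma_v\equiv0$ together with $s_u(e)+s_v(e)\equiv m_e\pmod d$, with $m_e$ determined by $\gamma_u,\gamma_v$. On a bipartite component $\gamma$ is free; on a non‑bipartite component an odd cycle forces $\gamma$ constant and equal to $0$ or to $\pi/d$, so $m_e$ is a constant $c$ there. For fixed $c$, the involution $t\mapsto c-t$ on $\mathbb{Z}_d$ controls everything: its fixed points correspond to edges carrying the same spoke at both ends (these form perfect matchings of $G$) and its transposed pairs correspond to $2$‑regular subgraphs with a consistent cyclic orientation; hence a valid spoke assignment exists iff $G$ decomposes into the corresponding $1$‑ and $2$‑factors. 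Since even‑regular graphs always $2$‑factorize (Petersen) and the only \emph{forced} matchings come from fixed points, this reduces to: $d$ even, no condition; $d$ odd, $G$ must have a perfect matching. The realizability caveat then decides the rest: a direct computation shows that when $d\equiv2\pmod4$ exactly one of the two forced values of $\gamma$ makes two spokes point along the tangent to $O$ — forcing those edges to be arcs of $O$, i.e.\ a Hamiltonian cycle drawn on $O$ — while the other makes the two ``fixed‑point'' spokes radial and forces the matchings $M_0\cup M_{d/2}$ to be properly $2$‑edge‑coloured around their cycles, i.e.\ a bipartite $2$‑factor; when $d\not\equiv2\pmod4$ one can always choose $\gamma$ avoiding tangent spokes. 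Running this in both directions gives the equivalence; the disconnected case is handled by observing that a Hamiltonian cycle is then unavailable and the remaining obstructions are per component (and that the bipartite‑$2$‑factor construction uses only radial and generic arcs, so it is insensitive to connectivity).

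I expect the main obstacle to be the realizability half of the first step — certifying that a combinatorially consistent spoke assignment actually produces a crossing‑legal Lombardi drawing — and within it, the precise bookkeeping of which residue classes of spokes become ``tangent to $O$'' versus ``radial'' as a function of $d\bmod 4$. That bookkeeping is elementary but it is exactly the phenomenon that separates $d\equiv2\pmod4$ from the other residues and that forces the appearance of the bipartite‑$2$‑factor and Hamiltonian‑cycle alternatives, so getting it right is the crux of the whole argument.
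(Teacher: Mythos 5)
Your proposal is correct and follows essentially the same route as the paper's proof: your $\gamma_v$ is the paper's ``twist'' of a vertex, the relation $\gamma_u+\gamma_v\equiv 0$ is the paper's $\theta_v=-\theta_w$, the odd-cycle argument forcing $\gamma\in\{0,\pi/d\}$ matches the paper's symmetric-twist analysis of non-bipartite components, and the partition of edges by spoke class (fixed points of $t\mapsto c-t$ giving $1$-factors, transposed pairs giving $2$-factors) is exactly the paper's partition of edges by their angle to $O$, with the same $d\bmod 4$ bookkeeping identifying the tangent class with a Hamiltonian cycle on $O$ and the radial classes with an alternating inside/outside (hence bipartite) $2$-factor. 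The only difference is presentational — your $\mathbb{Z}_d$/involution formalism versus the paper's direct geometric language — and your realizability step mirrors the paper's construction via Property~\ref{prop:circleAngle}.
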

\ifProceeding
\vspace{-11pt}
\fi
\begin{proof}
Suppose $G$ has a circular Lombardi drawing on a circle $O$ centered at~$o$; in this drawing, define the \emph{twist} $\theta_v$ of a vertex $v$ to be the sharpest of the angles between line segment $vo$ and the edges incident to $v$ (with positive sign if one of the edges forming the sharpest angle is clockwise of the line segment, and negative sign if there is only one edge forming the sharpest angle and it is counterclockwise of $v$). Then if $v$ and $w$ are adjacent in $G$, $\theta_v=-\theta_w$ except when there are two equal sharpest angles at both $v$ and $w$, in which case $\theta_v=\theta_w$. In each connected component either all vertices have the same twist, and have edge angles that are symmetric with respect to reflections through axis $vo$, or the component is bipartite, all vertices on one side of the bipartition have one twist, and all vertices on the other side of the bipartition have the opposite twist.

We can decompose each connected component of $G$ into $1$-regular and $2$-regular graphs by partitioning the edges of the component according to the angle they make with circle $O$.
For a bipartite component in which the vertices on the two sides of the bipartition have different twists, this forms a decomposition into $1$-regular graphs (some of which may be combined in pairs to form bipartite $2$-regular graphs). When $d$ is $2$ mod $4$ and a component of $G$ is not bipartite, the only possibilities for a symmetric twist are to make some edges parallel or perpendicular to $O$. Edges that are parallel to $O$ must be drawn as arcs of $O$ through all vertices, so they form a Hamiltonian cycle. Edges perpendicular to $O$ must form even-length cycles that alternate between the inside and outside of~$O$. Thus, in all cases a graph with a circular Lombardi drawing can be decomposed into $1$-regular and $2$-regular graphs matching the conditions of the lemma.

In the other direction, suppose that $G$ can be decomposed into $1$-regular and $2$-regular graphs with the additional conditions of the lemma. By combining pairs of $1$-regular graphs into a single $2$-regular graph, we may assume that all but at most one of these subgraphs are $2$-regular. Then we may choose an evenly spaced set of angles, draw each $2$-regular graph as a set of arcs that meet $O$ at one of these fixed angles, and draw the $1$-regular graph (if it exists) as a set of arcs that are perpendicular to and interior to $O$.
If $d$ is divisible by four, we can choose these angles in such a way that no angle is parallel to the circle $O$ and no angle is perpendicular to $O$. If $d$ is odd, the angles can be chosen so that the $1$-regular subgraph of $G$ is perpendicular to and interior to $O$, and all other angles are neither perpendicular nor parallel to $O$. If $d$ is congruent to 2 mod 4 and one of the $2$-regular graphs is a Hamiltonian cycle, we may draw it using edges that lie on $C$, placing the vertices in the order of this cycle. And if $d$ is congruent to 2 mod 4 and one of the $2$-regular graphs is bipartite, we may draw it using edges that are perpendicular to $O$, taking care in the vertex placement to avoid using an edge that connects two diametrally opposite points on $O$ via an exterior arc. In both of these cases where $d$ is 2 mod 4 we then draw the other subgraphs of the decomposition using arcs that are neither parallel to nor perpendicular to $O$.
\end{proof}

\ifProceeding
\vspace{-12pt}
\fi
\begin{theorem}
\label{thm:regular}
Every regular graph $G$ of degree divisible by four has a circular Lombardi drawing. A regular graph of odd degree has a circular Lombardi drawing if and only if it has a perfect matching. A regular graph of degree congruent to two modulo four has a circular Lombardi drawing if and only if it is Hamiltonian or has a $2$-regular bipartite subgraph.
In the cases of odd degree and degree divisible by four, when a circular Lombardi drawing exists it can be constructed in polynomial time.
\end{theorem}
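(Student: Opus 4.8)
The plan is to deduce Theorem~\ref{thm:regular} from Lemma~\ref{lem:regular} together with two classical results: Petersen's theorem that every graph of even degree $2k$ is an edge-disjoint union of $k$ $2$-factors, and the polynomial-time solvability of maximum matching (Edmonds' blossom algorithm). Throughout I read the ``$1$-regular'' and ``$2$-regular'' graphs in the decomposition promised by Lemma~\ref{lem:regular} as \emph{spanning} subgraphs --- that is, $1$-factors (perfect matchings) and $2$-factors --- and I read ``$2$-regular bipartite subgraph'' in the theorem as ``bipartite $2$-factor'' (a spanning union of even cycles); with these readings each clause of the theorem is obtained by supplying Lemma~\ref{lem:regular} with the appropriate decomposition.

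Degree divisible by four: if $d=4m$ then $G$ has even degree, so by Petersen's theorem it decomposes into $2$-factors, and since $d\not\equiv 2\pmod 4$ the first alternative of Lemma~\ref{lem:regular} yields a circular Lombardi drawing. For the polynomial-time claim I would use the constructive form of Petersen's theorem: orient the edges of each component along an Eulerian circuit so that every vertex has in- and out-degree $d/2$, build the $(d/2)$-regular bipartite graph on two copies $V^{+},V^{-}$ of $V(G)$ (an arc $u\to v$ becoming the edge $u^{+}v^{-}$), properly $(d/2)$-edge-color it by K\"onig's theorem, and pull the color classes back to $2$-factors of $G$; then run the ``if''-direction construction from the proof of Lemma~\ref{lem:regular}. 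Every step is polynomial.

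Odd degree: for the ``if'' direction, a perfect matching $M$ leaves $G-M$ a $(d-1)$-regular graph of even degree, which Petersen decomposes into $2$-factors; together with $M$ this is the decomposition required by Lemma~\ref{lem:regular}, and $d\not\equiv 2\pmod 4$. Conversely, a circular Lombardi drawing gives, by Lemma~\ref{lem:regular}, a decomposition of $G$ into $k_1$ perfect matchings and $k_2$ $2$-factors with $k_1+2k_2=d$; since $d$ is odd, $k_1$ is odd and hence at least $1$, so $G$ has a perfect matching. Algorithmically, decide existence of and find a perfect matching with Edmonds' algorithm and then proceed as in the ``if'' direction. Finally, degree $\equiv 2\pmod 4$: here Lemma~\ref{lem:regular} requires a decomposition of $G$ into $1$- and $2$-factors containing a $2$-factor that is bipartite or a Hamiltonian cycle. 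If $G$ is Hamiltonian, remove a Hamiltonian cycle $C$ and Petersen-decompose the $(d-2)$-regular remainder $G-C$, obtaining such a decomposition with the Hamiltonian $2$-factor $C$; if instead $G$ has a bipartite $2$-factor $B$, do the same with $B$ in place of $C$. Conversely, a circular Lombardi drawing gives, through Lemma~\ref{lem:regular}, a decomposition with a bipartite or Hamiltonian $2$-factor, which is in particular a bipartite $2$-factor or a Hamiltonian cycle of $G$; hence $G$ is Hamiltonian or has a bipartite $2$-factor. No polynomial-time construction is asserted in this case, consistent with the NP-hardness of recognizing Hamiltonicity.

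The substantive content is all in Lemma~\ref{lem:regular}; the argument above is essentially bookkeeping plus the two invocations. The point needing genuine care is aligning the combinatorial hypotheses of the theorem with the decomposition conditions of Lemma~\ref{lem:regular} in the $d\equiv 2\pmod 4$ regime --- in particular recognizing that ``having a $2$-regular bipartite subgraph'' must be read as ``having a bipartite $2$-factor'', since a non-spanning $2$-regular bipartite subgraph (for example the $4$-cycle sitting inside a disjoint union of a triangle and a $4$-cycle) need neither extend to the required decomposition nor force a circular Lombardi drawing. A secondary point is confirming that the proof of Petersen's theorem admits a polynomial-time implementation, which it does via Eulerian circuits and bipartite edge-coloring.
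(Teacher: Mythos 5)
Your proof is correct and follows essentially the same route as the paper, whose entire proof is the one-line observation that the theorem follows from Lemma~\ref{lem:regular} together with Petersen's $2$-factorization theorem. You simply fill in the case analysis and algorithmic details that the paper leaves implicit, and your insistence on reading the decomposition pieces as spanning subgraphs (so that ``$2$-regular bipartite subgraph'' means a bipartite $2$-factor) is the right resolution of that ambiguity, as confirmed by the paper's later corollary excluding $2$-regular non-bipartite disconnected graphs.
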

\ifProceeding
\vspace{-12pt}
\fi
\begin{proof}
This follows from Lemma~\ref{lem:regular} together with Petersen's theorem that a regular graph of even degree can always be decomposed into 2-regular subgraphs~\cite{m-jptrg-92,p-trg-91}.
\end{proof}

Testing for the existence of a $2$-regular bipartite subgraph in a regular graph is NP-complete (in $3$-regular graphs, it is equivalent to $3$-edge-coloring) but we have not determined its complexity for the case of interest to us, $d$-regular graphs in which $d$ is congruent to two modulo four.

Figures~\ref{fig:regular}(a--c) show drawings produced by this method for $3$-regular, $4$-regular, and $6$-regular graphs. Figure~\ref{fig:regular}(d) shows a 3-regular graph that does not have a perfect matching, and that therefore has no circular Lombardi drawing.

\begin{figure}
\begin{center}
\vspace*{-8pt}
\subfloat[]{\includegraphics[height=0.8in]{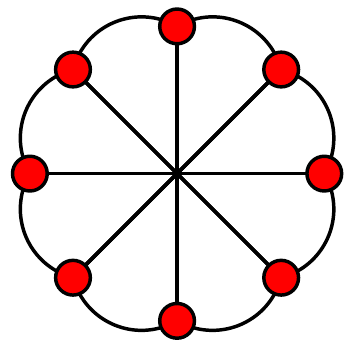}}
\hfill 
\subfloat[]{\includegraphics[height=1in]{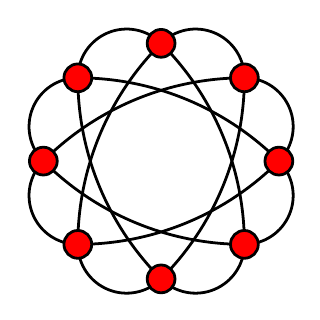}}
\hfill
\subfloat[]{\includegraphics[height=1.4in]{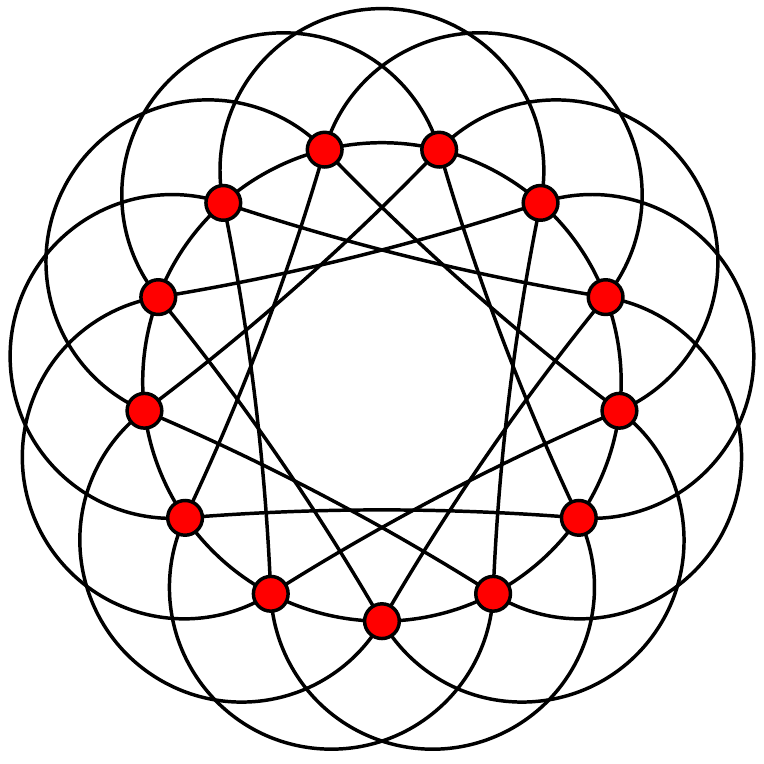}}
\hfill
\subfloat[]{\includegraphics[height=1.2in]{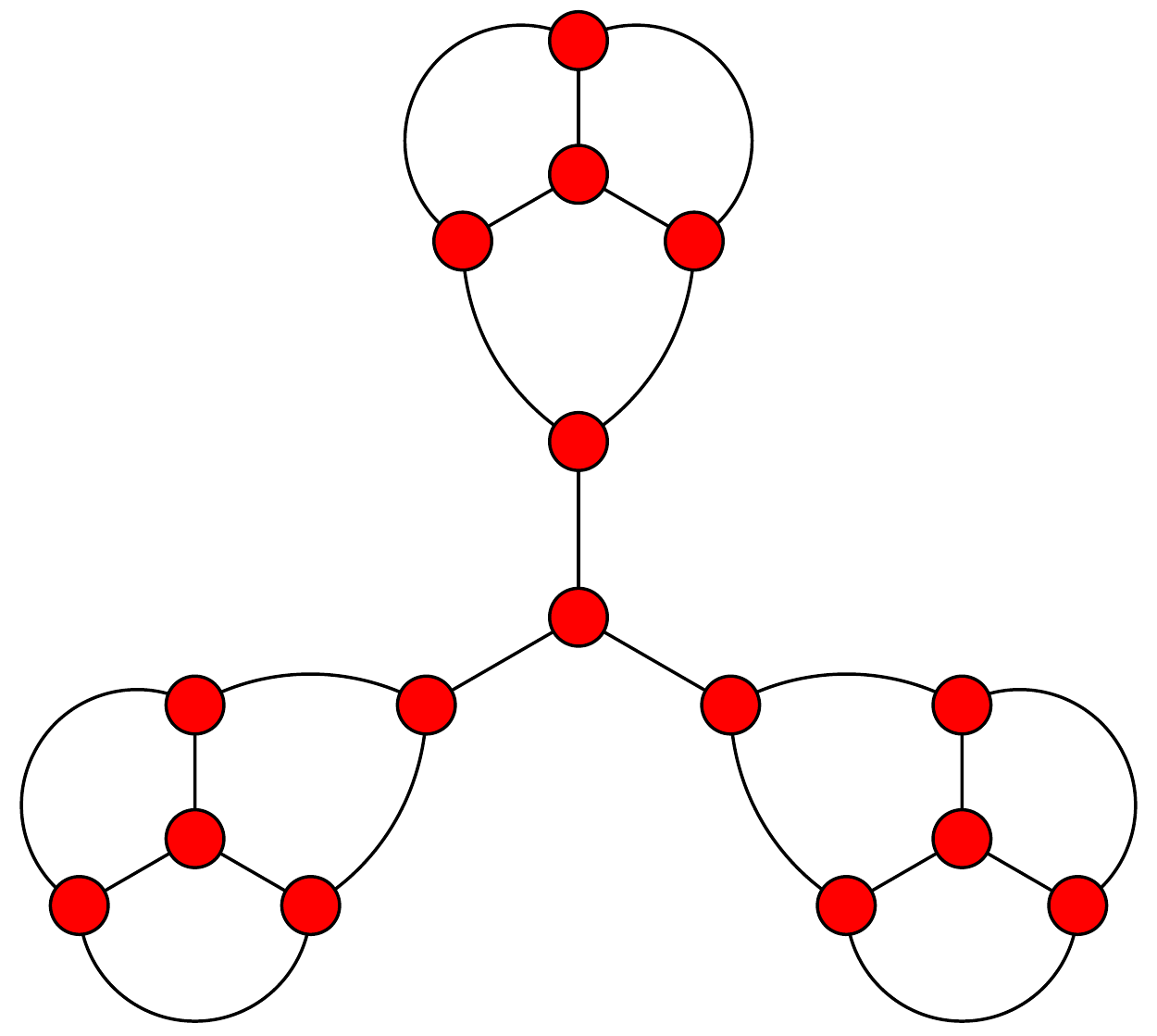}}
\end{center}
\vspace*{-12pt}
\caption{
(a) A circular Lombardi drawing of the 3-regular Wagner graph;
(b) A circular Lombardi drawing of the 4-regular graph $K_{4,4}$;
(c) The 6-regular Paley graph connecting integers modulo 13 if their difference is a quadratic residue;
(d) A 3-regular graph that has no perfect matching and therefore has no circular Lombardi drawing.}
\label{fig:regular}
\end{figure}

For bipartite regular graphs of bounded degree the method of Theorem~\ref{thm:regular} again leads to a linear-time algorithm.

\begin{corollary}
Every bipartite $d$-regular graph has a circular Lombardi drawing that can be constructed in time $O(dn\log d)$.
\end{corollary}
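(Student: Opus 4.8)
The plan is to read off everything we need from a single proper $d$-edge-colouring of $G$. By K\"onig's edge-colouring theorem, a bipartite $d$-regular graph has chromatic index exactly $d$, so it decomposes into $d$ perfect matchings, i.e.\ $d$ spanning $1$-regular subgraphs. Pairing these matchings two at a time yields spanning $2$-regular subgraphs, each of whose cycles alternates between the two colour classes and therefore has even length; hence every such subgraph is bipartite. In particular the side condition in Theorem~\ref{thm:regular} for $d\equiv 2\pmod 4$ (the existence of a $2$-regular bipartite subgraph) is satisfied automatically, and for odd $d$ the required perfect matching is one of the colour classes, so the existence half of the corollary is immediate from Theorem~\ref{thm:regular}.

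For the construction I would follow the ``other direction'' of the proof of Lemma~\ref{lem:regular}. Given the decomposition into at most $d/2$ bipartite $2$-regular subgraphs together with at most one leftover $1$-regular subgraph (when $d$ is odd), place the $n$ vertices on a circle $O$ in general position; a generic placement avoids the finitely many degenerate configurations that the lemma's proof flags, such as an edge joining two diametrally opposite points of $O$ by an exterior arc. Choose an evenly spaced family of angles, draw every edge of the $i$-th $2$-regular subgraph as the arc meeting $O$ at the $i$-th angle (inside or outside as needed), and draw the $1$-regular subgraph, if present, perpendicular to and inside $O$. By Property~\ref{prop:circleAngle} each such arc meets $O$ at the same angle at both endpoints, so the $d$ incident edge tangents are evenly spaced at every vertex and the drawing is Lombardi. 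Computing vertex coordinates and arc parameters costs $O(1)$ per edge, hence $O(dn)$ overall.

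Thus the running time is dominated by the edge-colouring step, and the main point of the proof is to invoke a fast bipartite edge-colouring procedure: a bipartite multigraph with $E$ edges and maximum degree $D$ can be properly coloured with $D$ colours in $O(E\log D)$ time (for instance by the algorithm of Cole, Ost, and Schirra). On our instance $E=dn/2$ and $D=d$, giving $O(dn\log d)$; pairing the matchings and emitting the geometric layout add only $O(dn)$, for a total of $O(dn\log d)$. I expect the only real content beyond bookkeeping to be this appeal to an $O(E\log D)$ bipartite edge-colouring algorithm — everything else is inherited from Lemma~\ref{lem:regular} and Theorem~\ref{thm:regular} or is a routine linear-time emission of the drawing — together with the minor verification that one generic placement of the vertices simultaneously avoids the degenerate angle/position coincidences of all $d/2$ subgraphs, which holds because each subgraph forbids only finitely many placements.
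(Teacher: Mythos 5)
Your proposal is correct and matches the paper's proof: the paper likewise obtains the decomposition into perfect matchings from a fast bipartite edge-colouring algorithm (the cited references include Cole--Ost--Schirra) and then applies Theorem~\ref{thm:regular}. You simply spell out in more detail the case analysis on $d \bmod 4$ and the geometric emission step that the paper leaves implicit.
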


\begin{proof}
It is known that every bipartite regular graph can be decomposed into perfect matchings in the given time bound~\cite{a-saecbm-03,cos-ecbm-01,s-becdmt-99}.\footnote{The fact that every regular bipartite graph has a decomposition into matchings is commonly attributed to K{\"o}nig~\cite{k-gem-31}, but is equivalent to a result proved in terms of point-line configurations in the 1894 Ph.D. thesis of Ernst Steinitz.} The result follows by applying Theorem~\ref{thm:regular} to this decomposition.
\end{proof}

\begin{corollary}
Every $d$-regular graph for which $d$ is a power of two, with the exception of $2$-regular non-bipartite disconnected graphs, has a circular Lombardi drawing that can be constructed in time $O(dn\log d)$.
\end{corollary}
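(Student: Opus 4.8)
The plan is to split on the value of $d$ modulo~$4$: since $d$ is a power of two, either $d=1$, or $d=2$, or $4\mid d$. When $4\mid d$, Theorem~\ref{thm:regular} already asserts that \emph{every} such graph has a circular Lombardi drawing, so nothing needs to be proved about existence. When $d=1$, every $1$-regular graph is its own perfect matching and is drawn by placing the vertices anywhere on the circle $O$ and routing each edge as an arc perpendicular to and interior to $O$, so existence is again immediate and the construction is $O(n)$. The only case requiring a short argument is $d=2$: a $2$-regular graph is a disjoint union of cycles, so if it is connected it is a single cycle, hence Hamiltonian, and Theorem~\ref{thm:regular} applies (draw the cycle as arcs of $O$ with the vertices in cyclic order); if it is bipartite, then it is a $2$-regular bipartite subgraph of itself, so Theorem~\ref{thm:regular} applies again and the construction is the one from the preceding corollary, running in $O(n)$ time; the remaining graphs are exactly the disconnected non-bipartite ones excluded in the statement, about which we claim nothing. (By Lemma~\ref{lem:regular}, such a graph in fact has a circular Lombardi drawing if and only if it contains an even cycle, so the exclusion is slightly conservative.)

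The substantive point is the bound $O(dn\log d)$ when $4\mid d$, since Theorem~\ref{thm:regular} only guarantees a polynomial-time construction, via Petersen's theorem. I would make that decomposition explicit and efficient, exactly as in the proof of the preceding corollary but with one extra step. Compute, in each connected component, an Euler circuit in $O(dn)$ total time and orient every edge along it; each vertex then has in-degree and out-degree $d/2$. Form the bipartite graph $H$ on two copies $V^{+},V^{-}$ of the vertex set, with an edge $u^{+}v^{-}$ for every oriented edge $u\to v$; then $H$ is $(d/2)$-regular and bipartite, and a proper $(d/2)$-edge-coloring of $H$ pulls back to a decomposition of $G$ into $d/2$ edge-disjoint $2$-factors. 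Such a coloring is found in $O(dn\log d)$ time by one of the bipartite edge-coloring algorithms already cited for the preceding corollary~\cite{a-saecbm-03,cos-ecbm-01,s-becdmt-99}. Finally, apply Property~\ref{prop:circleAngle} to draw the $d/2$ $2$-factors as arcs meeting $O$ at $d/2$ distinct twist angles, chosen (this is where $d\equiv 0\pmod 4$ is used) so that none of the resulting $d$ edge directions at a vertex is parallel or perpendicular to $O$; this takes a further $O(dn)$ time. Summing, the construction runs in $O(dn\log d)$ time, and for $d\in\{1,2\}$ it is linear.

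I expect the only real obstacle to be verifying that this last chain of steps stays within the claimed bound: Petersen's theorem is usually proved through precisely the Euler-circuit-to-bipartite reduction above, so the work is just to check that Euler circuits, construction of $H$, the edge-coloring of $H$, and the translation of a color class back into a $2$-factor and then into arcs are each within $O(dn\log d)$ — with the edge-coloring being the bottleneck and the only superlinear step. Everything else reduces to quoting Theorem~\ref{thm:regular}, Lemma~\ref{lem:regular}, and the preceding corollary.
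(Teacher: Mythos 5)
Your proof is correct, but it reaches the $2$-factorization by a different route than the paper. The paper's proof is a one-liner: repeatedly split the graph into two subgraphs of half the degree by taking alternating edges of an Euler tour (citing Gabow), and apply Theorem~\ref{thm:regular} to the resulting decomposition; because $d$ is a power of two, the halving terminates exactly at $2$-regular (or $1$-regular) pieces, each level costs $O(dn)$, and there are $O(\log d)$ levels. You instead perform a single Euler orientation, pass to the derived $(d/2)$-regular bipartite graph $H$, and invoke a bipartite edge-coloring algorithm to extract the $2$-factors --- i.e., the standard efficient proof of Petersen's $2$-factor theorem. Both give $O(dn\log d)$; the paper's version is more elementary and uses the power-of-two hypothesis directly, while yours is more general (it would work for any even $d$) at the cost of importing the edge-coloring machinery, which internally uses the same Euler-splitting idea anyway. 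Your explicit case analysis for $d\in\{1,2\}$ matches what the paper leaves implicit. One caveat: your parenthetical claim that a disconnected non-bipartite $2$-regular graph has a circular Lombardi drawing if and only if it contains an even cycle does not follow from Lemma~\ref{lem:regular}. For $d=2$ the only admissible decompositions are the graph itself (one spanning $2$-regular subgraph) or two perfect matchings, so the graph must be a single Hamiltonian cycle or be bipartite; a disconnected graph with one odd and one even cycle satisfies neither, so the exclusion in the statement is exact, not conservative. Since you explicitly claim nothing about the excluded graphs, this does not affect the validity of your proof.
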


\begin{proof}
Repeatedly decompose the graph into pairs of subgraphs with half the degree by taking alternating edges of an Euler tour~\cite{g-uepecbm-76} and then apply Theorem~\ref{thm:regular} to the decomposition.
\end{proof}

\begin{corollary}
Every $3$-regular bridgeless graph has a circular Lombardi drawing that can be constructed in time $O(n\log^3 n\log\log n)$.
\end{corollary}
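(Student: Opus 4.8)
The plan is to obtain this corollary by feeding a suitable decomposition into Theorem~\ref{thm:regular}. Since $3$ is odd, Theorem~\ref{thm:regular} already tells us that a $3$-regular graph $G$ has a circular Lombardi drawing if and only if $G$ has a perfect matching, and that when one exists the drawing can be produced in polynomial time from the matching. By Petersen's theorem that every bridgeless cubic graph has a perfect matching~\cite{p-trg-91}, the bridgeless hypothesis guarantees a perfect matching $M$; its complement $G\setminus M$ is $2$-regular, so $G=M\cup(G\setminus M)$ is precisely the decomposition into a $1$-regular and a $2$-regular graph required in the odd-degree case of Lemma~\ref{lem:regular}. One then draws $G\setminus M$ with arcs meeting the host circle $O$ at two evenly spaced angles that are neither parallel nor perpendicular to $O$, and draws $M$ with arcs perpendicular to and interior to $O$, perturbing the vertices on $O$ to avoid a diametrally opposite pair of endpoints, exactly as in the constructive direction of the proof of Lemma~\ref{lem:regular}.

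The substance of the corollary is therefore the running time, since Petersen's existence proof does not by itself yield a near-linear-time construction of $M$. Here I would invoke the divide-and-conquer algorithm of Biedl, Bose, Demaine, and Lubiw for Petersen's matching theorem: it recursively splits the graph while preserving bridgelessness, and the bottleneck is the maintenance of a fully dynamic $2$-edge-connectivity structure on a cubic graph. Instantiating that structure with an efficient fully dynamic $2$-edge-connectivity data structure yields total time $O(n\log^3 n\log\log n)$ to compute $M$; all the remaining bookkeeping is linear because a cubic graph has $O(n)$ edges. Forming $M\cup(G\setminus M)$ and running the explicit vertex-and-arc placement from Lemma~\ref{lem:regular} costs a further $O(n)$ time, so the stated bound dominates.

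The main obstacle is entirely in the second step: the existence of the drawing is immediate from Petersen's theorem together with Theorem~\ref{thm:regular}, and a generic maximum-matching algorithm would give existence with an $n^{3/2}$-type bound, but attaining $O(n\log^3 n\log\log n)$ requires the careful recursive reduction to dynamic $2$-edge-connectivity, and one must check that neither the recursion overhead nor the final drawing step exceeds this. I expect no difficulty in the drawing step, since it is the special case $d=3$ of the odd-degree construction already described for Theorem~\ref{thm:regular}, where the single $1$-regular subgraph is $M$ and the $2$-regular subgraph is $G\setminus M$.
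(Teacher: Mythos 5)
Your proposal matches the paper's proof: both obtain the perfect matching from Petersen's theorem for bridgeless cubic graphs, compute it via the Biedl--Bose--Demaine--Lubiw algorithm built on fully dynamic $2$-edge-connectivity structures to get the stated time bound, and then feed the resulting $1$-regular plus $2$-regular decomposition into the odd-degree case of Theorem~\ref{thm:regular}. The added detail about the drawing step is consistent with the constructive direction of Lemma~\ref{lem:regular} and introduces no new ideas beyond what the paper uses.
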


\begin{proof}
The result that every $3$-regular bridgeless graph has a perfect matching (equivalently, a decomposition into a $2$-regular and a $1$-regular subgraph) is known as Petersen's theorem~\cite{p-trg-91}. Such a matching can be found in the stated time bound via an algorithm based on dynamic 2-edge-connectivity testing data structures~\cite{bbdl-eapmt-01,hlt-pldfda-01,t-nofdgc-00}.
\end{proof}

\section{Two-Degenerate and Three-Degenerate Graphs}

The \highlight{degeneracy} of a graph $G$ is the minimum number $d$ such that $G$ can be reduced to the empty graph by repeatedly removing a vertex of degree at most $d$; equivalently, it is the minimum degree in the subgraph of $G$ that maximizes the minimum degree~\cite{lw-kdg-70}. If a graph $G$ has degeneracy at most $d$, it is known as $d$-degenerate. In this section we consider algorithms for drawing $2$-degenerate and $3$-degenerate graphs,  with a specified cyclic ordering
of the edges around each vertex. The main idea of these algorithms is to delete a low-degree vertex, draw the remaining graph with the appropriate angles at each of its vertices, and then find a position for the deleted vertex that allows it to be connected to the drawing of the remaining graph.

For 2-degenerate graphs, when we add
back the vertices in reverse order of deletion, there is always a
circle on which they can be added so we can
choose one point on the circle that is not crossed by a previously
drawn feature. For 3-degenerate graphs there are two points at which
the point can be added to give the correct edge angles (the common intersection points of three circles) so there might be circumstances under which this addition is forced to create an undesirable edge-vertex or vertex-vertex intersection.

\ifInAppendix
\label{sec:prelims}
\else

\ifDeferToAppendix
The results in this section rely on the following geometric property; 
we defer the proof to the appendix.
\else
\ifProceeding
The results in this section rely on the following geometric property,
which is proven in the full version of the paper~\cite{arxiv-version}.
\else
The results in this section rely on the following geometric property:
\fi
\fi

\begin{property}
\label{prop:arcLocus}
Suppose we are given two points $p$ and $q$ with associated
vectors $\vec{v_p}$ and $\vec{v_q}$ and an angle $\theta_{pq}$.
Consider all pairs of circular arcs 
that leave $p$ and $q$ with tangent vectors $\vec{v_p}$ and $\vec{v_q}$
respectively and meet at an angle $\theta_{pq}$.
The locus of meeting points for these pairs of arcs is a circle.
\end{property}
\fi

\ifInAppendix
We now present the proof for Property~\ref{prop:arcLocus}.
\fi

\ifDeferToAppendix
\else
\ifProceeding
\else
\ifUseAppendix
\begin{proof}[Of Property~\ref{prop:arcLocus}]
\else
\begin{proof}
\fi
Let $r_1$ be the meeting point of one such pair of arcs.
Let $O$ be the circle defined by the three points $p$, $q$, and $r_1$.
From Property~\ref{prop:circleAngle}, the angle $\theta_p$
that the arc from $p$ makes with $O$ as it leaves $p$
is the same as when it arrives at $r_1$.
Similarly, let $\theta_q$ be the angle of the arc with $O$ at both
$q$ and $r_1$.
Therefore, we know that the angle formed by the intersection of 
the two arcs at $r_1$ is
$\theta_{pq} = \pi - \theta_p - \theta_q$;
see Fig.~\ref{fig:circle}(a).

\begin{figure}
\begin{center}
  \subfloat[]{\input{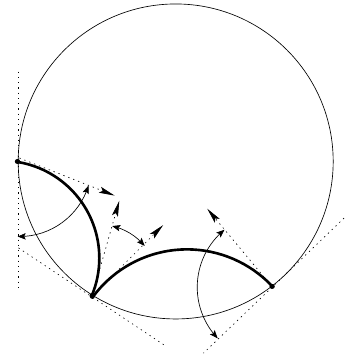_t}}
  \hfil
  \subfloat[]{\input{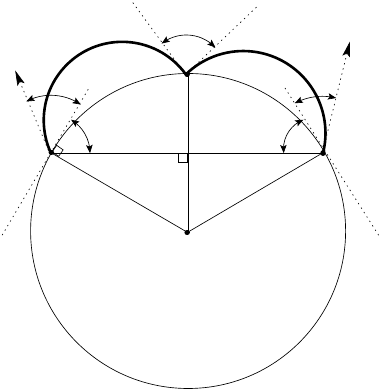_t}}
\end{center}
\vspace{-.5cm}
 \caption{\sf (a) Angle calculation; (b) Circle construction.}
  \label{fig:circle}
\end{figure}

Now, for any other point $r_2$ on $O$,
a circular arc from $p$ through $r_2$ with the same
outgoing tangent vector $\vec{v_p}$ must again form the same
angle $\theta_p$ with $O$ at both $p$ and $r_2$.
The same holds for the angle $\theta_q$ at $q$ and $r_2$.
Therefore, the angle formed by the intersection of the two arcs
at $r_2$ is also $\theta_{pq}$.

We can also determine the equation 
for this circle $O$.
Our goal is to calculate the angle formed by the center of $O$
and the two points $p$ and $q$.
From that, we can use basic trigonometry to 
calculate the position of the center based
on the positions of $p$ and $q$.
For simplicity, assume that the two fixed points $p$ and $q$ are 
horizontally aligned; see Fig.~\ref{fig:circle}(b). 
Let $r$ be the point on $O$ halfway between $p$ and $q$.
Since $r$ lies directly above the center of the circle,
we know that the desired angle is exactly $2x$,
where $x$ is the angle formed by the horizontal line (from $p$ to $q$) 
and the tangent to $O$ at $p$ (or $q$).
From $\vec{v_p}$, we know the angle, say $\theta_{ph}$,
between the outgoing arc from $p$ 
and the horizontal line.
In Fig.~\ref{fig:circle}(b), this corresponds to the angle $\theta_p + x$.
Similarly we have angle $\theta_{qh} = \theta_q + x$.

Finally, from above, we know the angle at $r$ is 
$\theta_{pq} = \pi - \theta_p - \theta_q$.
Solving for $x$, yields that 
$2x=\theta_{ph} + \theta_{qh} - \theta_p - \theta_q = 
  \theta_{ph} + \theta_{qh} + \theta_{pq} - \pi$.
\end{proof}
\fi
\fi

\subsection{2-Degenerate Graphs}

\begin{theorem}
\label{thm:2degenerate}
Every 2-degenerate graph with a specified cyclic ordering of
the edges around each vertex has a Lombardi drawing.
\end{theorem}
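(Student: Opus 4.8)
The plan is to construct the drawing incrementally, one vertex at a time. Since $G$ is $2$-degenerate, first fix an ordering $v_1,\dots,v_n$ of its vertices that reverses a degeneracy reduction order, so that each $v_i$ has at most two neighbors among $v_1,\dots,v_{i-1}$. Independently, at each vertex $v$ of degree $d(v)$ fix once and for all an assignment of the $d(v)$ edges at $v$ to $d(v)$ evenly spaced tangent directions that is consistent with the prescribed cyclic ordering; this still leaves a free global rotation $\phi_v$ of the direction set, to be pinned down only when $v$ is placed. We then process $v_1,\dots,v_n$ in order, maintaining the invariant that after step $i$ we have a drawing of the induced subgraph on $\{v_1,\dots,v_i\}$ in which every already-placed vertex has its correct evenly spaced tangents, every drawn edge realizes its assigned tangent direction at each endpoint, and no already-placed vertex lies in the interior of any drawn arc. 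The drawing after step $n$ is then the desired Lombardi drawing of $G$: the local angular resolution and the rotation system are correct at every vertex, and by the invariant no vertex other than an edge's two endpoints touches that edge.

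Adding $v=v_i$ splits into three cases according to the number $k\in\{0,1,2\}$ of its neighbors already placed. If $k=0$, place $v$ at any point that is not an already-placed vertex and not on any drawn arc (all but a measure-zero set of points) and draw nothing new. If $k=1$, with placed neighbor $w$ holding reserved direction $\vec{v_w}$ for the edge $vw$, place $v$ at a generic point that avoids the placed vertices, the drawn arcs, and, for each placed vertex $u$, the single circle through $w$ and $u$ tangent to $\vec{v_w}$ at $w$ (the only positions for which the arc $wv$ would pass through $u$); draw the unique arc or segment from $w$ to $v$ leaving $w$ along $\vec{v_w}$; and set $\phi_v$ so $v$'s slot for $vw$ matches that arc's tangent at $v$. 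If $k=2$, with placed neighbors $w_1,w_2$, reserved directions $\vec{v_{w_1}},\vec{v_{w_2}}$, and required angle $\alpha$ between the two edge-rays at $v$ (a multiple of $2\pi/d(v)$ determined by the cyclic ordering), apply Property~\ref{prop:arcLocus} with $p=w_1$, $q=w_2$, $\vec{v_p}=\vec{v_{w_1}}$, $\vec{v_q}=\vec{v_{w_2}}$, and $\theta_{pq}=\alpha$ to obtain a circle $\Gamma$ of candidate positions for $v$, choosing the branch that realizes the correct cyclic orientation at $v$. Place $v$ on $\Gamma$ at a point avoiding the placed vertices, the drawn arcs, and, for each placed vertex $u$, the positions at which arc $w_1v$ or arc $w_2v$ would pass through $u$; then draw the two arcs and set $\phi_v$ to align $v$'s slot for $vw_1$ with the tangent of arc $w_1v$ at $v$, which automatically aligns the slot for $vw_2$ since by Property~\ref{prop:arcLocus} the two arcs meet at $v$ at exactly angle $\alpha$.

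The step I expect to be the main obstacle is showing that a valid point on $\Gamma$ always exists in the $k=2$ case. For fixed $w_1$ and $\vec{v_{w_1}}$, the arc $w_1v$ passes through a given placed vertex $u$ only when $v$ lies on one fixed circle, so generically the forbidden set on $\Gamma$ is a finite union of points and proper sub-arcs and leaves an open arc of valid choices; one must also check that the genuinely degenerate situations — a previously drawn arc lying along $\Gamma$, the new arc degenerating to a pair of collinear rays, or $\alpha\in\{0,\pi\}$ turning $\Gamma$ into a line — do not obstruct the construction, which can be arranged by a small perturbation of the earlier choices, since every step retains at least one free degree of freedom. The cases $k=0,1$, the verification that the prescribed cyclic order and even spacing are realized exactly, and the maintenance of the ``no vertex interior to an arc'' invariant across steps are then routine.
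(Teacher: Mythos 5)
Your proof is correct and follows essentially the same route as the paper: insert vertices in reverse degeneracy order, use the reserved tangent directions at the already-placed neighbors together with Property~\ref{prop:arcLocus} to obtain a circle of candidate positions, and pick a generic point on it avoiding previously drawn vertices and arcs. Your treatment is somewhat more detailed than the paper's (which dismisses the degree-one case as ``simpler'' and handles genericity and degeneracies such as collinear-ray edges only by a brief remark about perturbation), but the underlying argument is the same.
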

\begin{proof}
Order the vertices by repeatedly removing
a low-degree vertex.
Reinsert the vertices in reverse order creating subgraphs $G_0,
G_1 \dots G_n$ with the invariant
that after each insertion the drawing is a {\em partial} Lombardi
drawing $\Gamma_i$ of $G_i$ 
where some vertices may not yet have all of their 
neighbors placed.
To insert a new vertex $v=v_{i+1}$ with degree two in $G_{i+1}$ 
(the case for degree one is simpler)
let $p$ and $q$ be its two neighbors in $G_{i+1}$.
Since there is a specified ordering around $p$,
which has already been placed in $\Gamma_i$,
there is a unique tangent vector $\vec{v_p}$ associated with the arc 
from $p$ to $v$.
Similarly, there is a unique tangent vector $\vec{v_q}$.
In addition, since the degree of $v$ in $G$ is known and the 
ordering of the neighbors at $v$ is also given, there is a unique
angle $\theta_{pq}$ associated with the two arcs from $p$ and $q$
to $v$.
From Property~\ref{prop:arcLocus}, we may choose to place $v$
at any position on the defined circle.
Choosing a point $v$ that does not coincide with any
other arcs or vertices already placed guarantees we have a
valid drawing $\Gamma_{i+1}$.
\end{proof}

\begin{corollary}
\label{cor:outerplanar}
Every outerplanar or series-parallel graph has a Lombardi drawing.
\end{corollary}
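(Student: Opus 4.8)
The plan is to derive the corollary directly from Theorem~\ref{thm:2degenerate}: it suffices to show that outerplanar graphs and series-parallel graphs are $2$-degenerate, after which we may invoke Theorem~\ref{thm:2degenerate} with any cyclic edge ordering we like (for instance one induced by a planar or outerplanar embedding of $G$, or an arbitrary one, since the theorem's conclusion holds for every choice of ordering).

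I would establish $2$-degeneracy from two ingredients. The first is \emph{heredity under vertex deletion}: outerplanar graphs are exactly the graphs with no $K_4$ or $K_{2,3}$ minor, and series-parallel graphs are exactly the graphs with no $K_4$ minor, and any class defined by excluded minors is closed under deleting a vertex. The second is the \emph{existence of a low-degree vertex}: every outerplanar and every series-parallel graph on at least one vertex has a vertex of degree at most $2$. I would obtain this cleanly from the fact that both classes have treewidth at most $2$, together with the standard observation that a graph of treewidth at most $k$ has a vertex of degree at most $k$ --- in a width-$k$ tree decomposition, take a leaf bag $B$ with parent bag $B'$ and a vertex $v \in B\setminus B'$; then every neighbor of $v$ lies in $B$, so $\deg(v)\le |B|-1\le k$. (Alternatively, for outerplanar graphs one can augment to a maximal outerplanar graph, i.e.\ a polygon triangulation, whose weak dual is a tree, and take the apex of an ``ear'' triangle corresponding to a leaf of that tree; for series-parallel graphs one can use that they are partial $2$-trees and that a $2$-tree has a degree-$2$ vertex from the last step of its construction. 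In either case the degree of this vertex in $G$ is at most its degree in the augmented graph.)

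Putting these together: starting from $G$, repeatedly delete a vertex of degree at most $2$; heredity keeps us inside the class, so such a vertex always exists, and the process terminates at the empty graph. Hence $G$ is $2$-degenerate, and Theorem~\ref{thm:2degenerate} produces a Lombardi drawing. Since outerplanar graphs are themselves a subclass of series-parallel graphs, one could even phrase the whole argument for series-parallel graphs only.

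I do not expect a genuine obstacle. The one place to be slightly careful is to match the notion of $2$-degeneracy used by Theorem~\ref{thm:2degenerate}: it is enough that \emph{some} sequence of degree-$\le 2$ deletions empties the graph, which is precisely what heredity guarantees, so no case analysis on which vertex to remove is needed. If one additionally wants the resulting Lombardi drawing to respect a prescribed (outer)planar embedding, one simply supplies that embedding's rotation system as the required cyclic ordering in Theorem~\ref{thm:2degenerate}; nothing else in the argument changes.
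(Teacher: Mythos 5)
Your proposal is correct and takes essentially the same route as the paper: the paper's entire proof is the one-line observation that outerplanar and series-parallel graphs are $2$-degenerate, so Theorem~\ref{thm:2degenerate} applies. You simply supply the (standard) justification of that $2$-degeneracy fact, which the paper takes for granted.
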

\begin{proof}
This follows from the fact that these graphs
are 2-degenerate.
\end{proof}

\subsection{3-Degenerate Graphs}

An algorithm following the same approach can be used to draw many, but not all, 3-degenerate graphs.
In this case we have three points $p$, $q$, and $r$
that we want to connect by arcs to an unplaced new vertex $v$.
Each pair of known points yields a circle of possible choices
for $v$.
These three circles, $O_{pq}, O_{pr}, O_{qr}$, have to pairwise
cross, and where they cross the third one must also cross
because fixing the angles between two pairs of incoming arcs
at the new point fixes all angles. Every graph with maximum degree four is either 4-regular or 3-degenerate, so the same algorithm applies in this case.

However, for certain graphs and certain orderings of the edges around the vertices of the graph, this algorithm can fail by placing a vertex on another edge or vertex. An example in which this occurs is the seven-vertex split graph $G_7$ formed by adding four independent vertices $p$, $q$, $r$, and $s$ to a triangle $xyz$, with an edge from each of $p$, $q$, $r$, and $s$ to each of $x$, $y$, and $z$, as shown in Figure~\ref{fig:imposs3degen}. In any Lombardi drawing of $G_7$ with the edge order as shown, we can assume by making an appropriate M\"obius transformation of the drawing that $xyz$ is equilateral. It follows that the only possible locations for $p$, $q$, $r$, and $s$ are the centroid of the equilateral triangle and the point at infinity, so at least two vertices would have to be placed at the same point, forming an invalid drawing.

\begin{figure}
  \centering
  \vspace*{-12pt}
  \strut\hfill
  \subfloat[]{\includegraphics[scale=0.35,page=1]{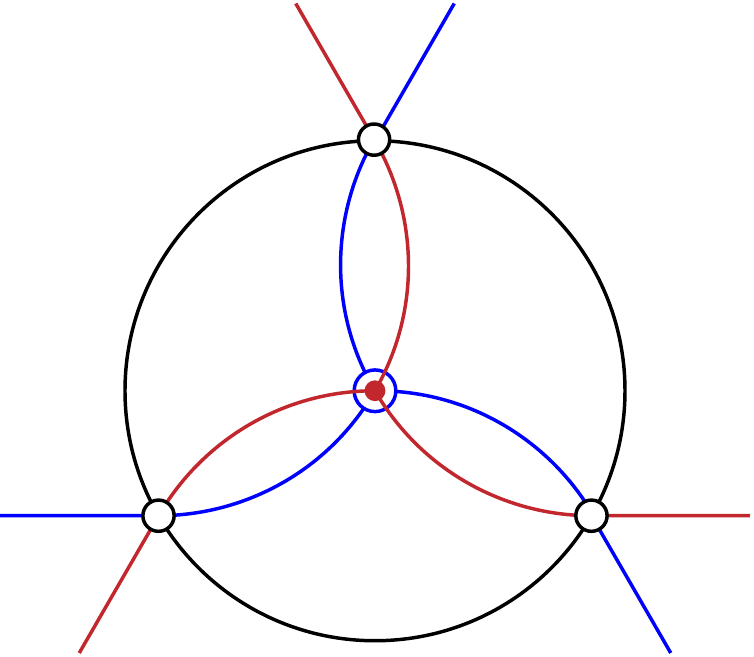}}
  \hfill
  \subfloat[]{\includegraphics[scale=0.35,page=2]{imposs3degen}}
  \hfill
  \subfloat[]{\includegraphics[scale=0.35,page=3]{imposs3degen}}
  \hfill\strut
  \vspace*{-8pt}
  \caption{A 7-vertex 3-degenerate graph that has no Lombardi drawing with 
  the given vertex ordering.
  (a) A M\"obius transformation makes one triangle equilateral, 
  forcing the other 4 vertices to be placed at the centroid and the point 
  at infinity; 
  (b) A different transformation with finite vertex locations;
  (c) A straight-line drawing of the graph.}
  \label{fig:imposs3degen}
  \vspace*{-6pt}
\end{figure}

\section{Non-Crossing Lombardi Drawings}
\ifProceeding
\vspace*{-1pt}
\fi
\subsection{Planar Graphs Without Planar Lombardi Drawings}

Not every planar graph has a planar Lombardi drawing. To see this, consider the \emph{$k$-nested triangle graphs}, maximal planar graphs with $3k$ vertices formed by $k$ nested triangles with $k-1$ six-cycles connecting consecutive triangles. A $k$-nested triangle graph may also be formed geometrically by gluing $k-1$ octahedra end-to-end.

As can be seen in Figure~\ref{fig:triangles}, the 2-nested and 3-nested triangle graphs have planar Lombardi drawings.
The 4-nested triangle graph, however, does not. If it did have such a drawing, its middle two triangles would form circles (the only smooth curve formed by three circular arcs). By an appropriate M\"obius transformation, the outer circle $O$ can be assumed to have its three vertices equally spaced around it. The three cirles $C_1$, $C_2$, and $C_3$ that (by Property~\ref{prop:arcLocus}) describe the potential positions of the vertices on the inner circle have the same radius as $O$ and meet at the center of $O$, and the inner circle would have to be tangent to all three of $C_1$, $C_2$, and $C_3$. However, the only circle tangent to all three is exterior to $O$, concentric with $O$ and having twice the radius of $O$. Therefore, using an edge ordering around each vertex that comes from a planar embedding but enforcing perfect angular resolution leads to a nonplanar drawing, shown in Figure~\ref{fig:triangles}(c).

\begin{figure}
  \vspace*{-18pt}
  \centering
  \subfloat[$k=2$]{\includegraphics[width=2.5cm]{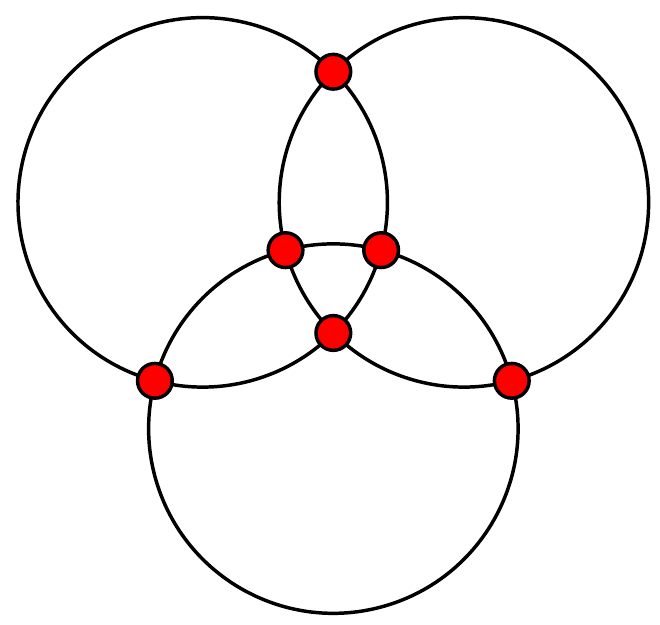}}
  \hfil
  \subfloat[$k=3$]{\includegraphics[width=4.5cm]{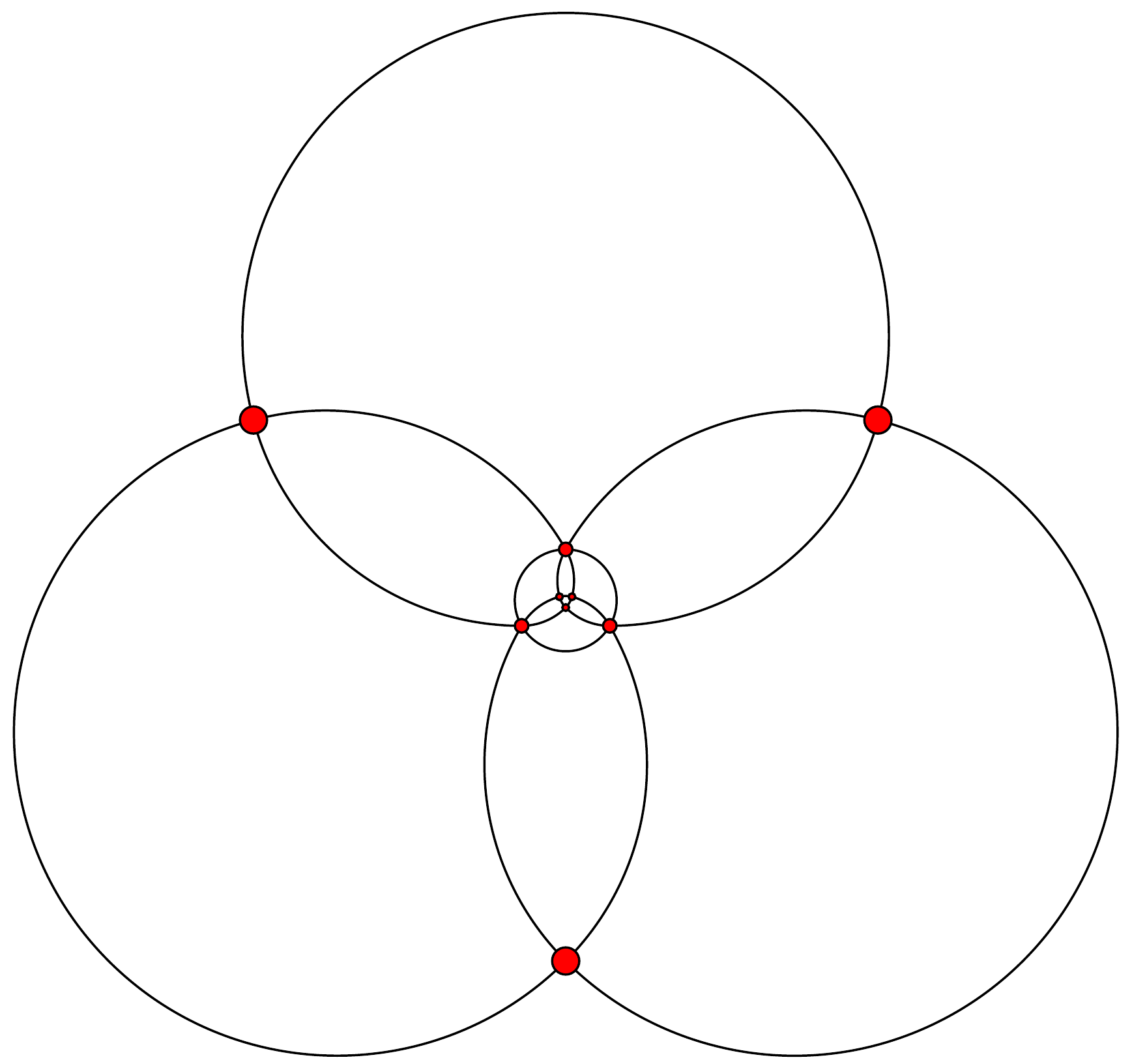}}
  \hfil
  \subfloat[$k=4$]{\includegraphics[width=4.5cm]{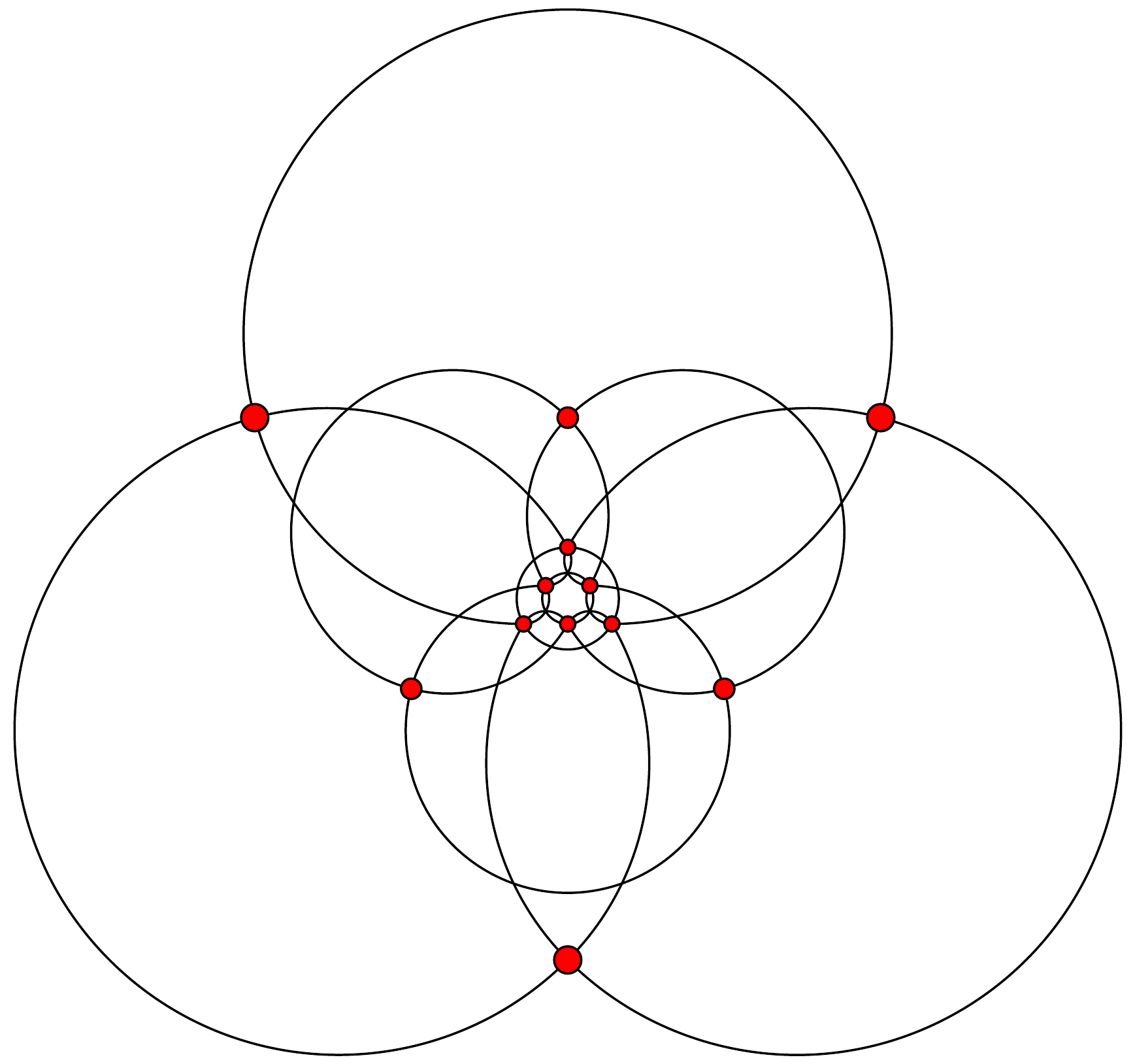}}
  \vspace*{-8pt}
  \caption{\sf $k$-nested triangle graphs. The $2$-nested and $3$-nested triangle graphs have planar Lombardi drawings, but the $4$-nested triangle graph does not.}
  \label{fig:triangles}
\end{figure}

\subsection{Halin Graphs}

A Halin graph~\cite{h-uszbg-64} is a planar graph obtained from a plane tree $T$ (with at least four vertices and with no vertices of degree 2), by connecting all the leaves of $T$ into a cycle in the order given by its embedding. As we now describe, Halin graphs (and the graphs formed in the same way from trees with degree-2 vertices) have planar Lombardi drawings that can be constructed using hyperbolic geometry.

We draw $T$ within a Poincar\'e disk model of the hyperbolic plane, with its leaves on the boundary circle of the model, and then draw the cycle connecting the leaves outside this circle. If $T$ is drawn using hyperbolic line segments, with perfect angular resolution, then its edges will form circular arcs in the Poincar\'e model; the conformal (angle-preserving) nature of the Poincar\'e model implies that the angular resolution of the hyperbolic line segments equals the angular resolution of these Euclidean arcs.

For a given straight-line drawing of a rooted tree in the hyperbolic plane, and a non-root vertex $v$, partition the hyperbolic plane into wedges bounded by the bisectors of the angles around the parent of $v$ and define the \highlight{dominance region} of $v$ to be the wedge containing $v$. Equivalently, in a Voronoi diagram generated by the rays from the parent of $v$ to its children, the dominance region of $v$ is the Voronoi cell containing $v$. We define a \highlight{good hyperbolic drawing} of a rooted tree $T$ to be a drawing in which the edges are straight line segments or rays in the hyperbolic plane, the leaves are placed on the circle at infinity, and the dominance regions for two vertices $v$ and $w$ are either nested within each other (if one of the two vertices is an ancestor of the other) or disjoint otherwise. Two dominance regions in a good hyperbolic drawing are shown in Figure~\ref{fig:halin}(a).

\begin{figure}[tb]
  \centering
  \subfloat[]{\includegraphics[height=3cm,page=1]{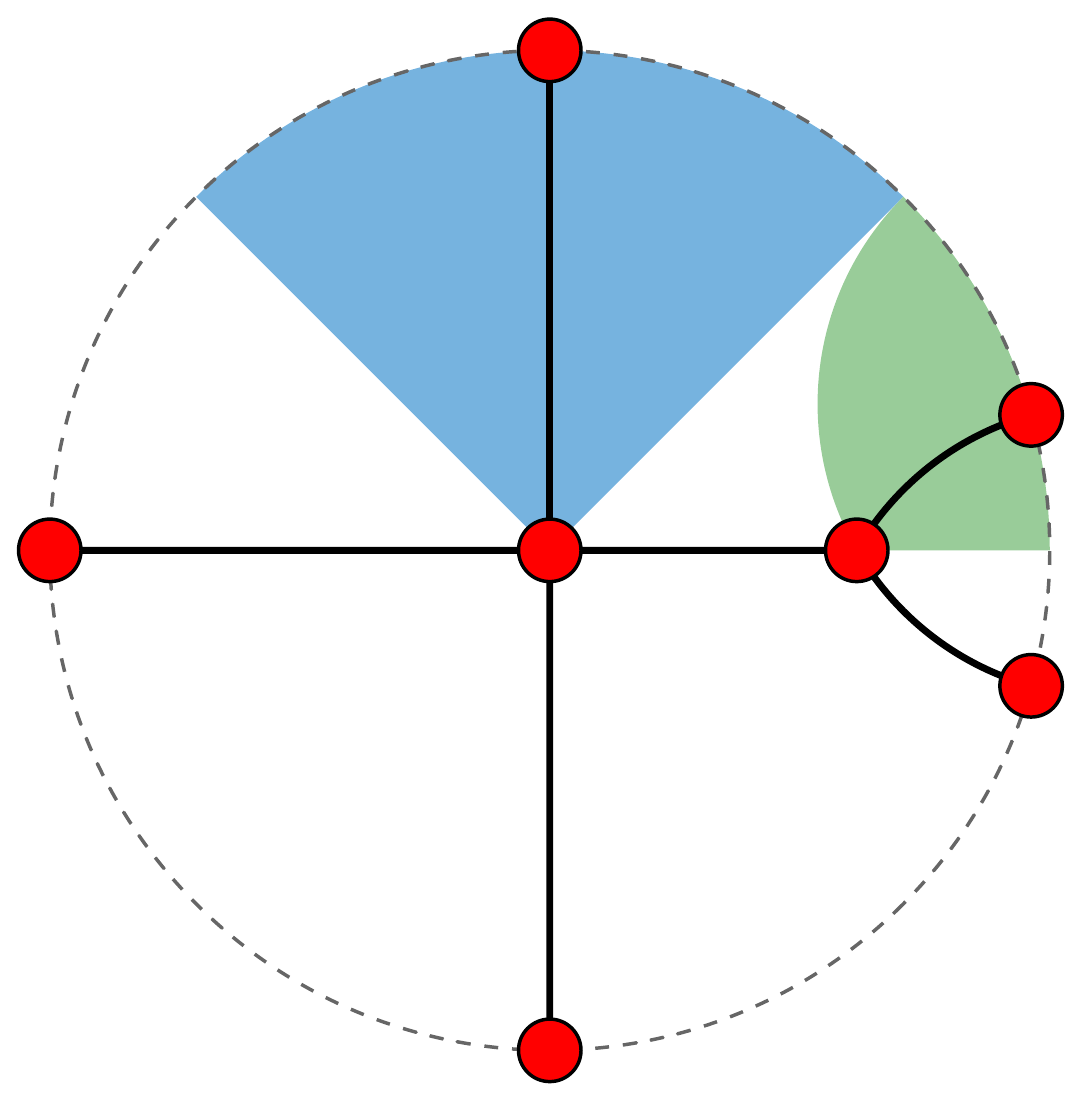}}
  \hfil
  \subfloat[]{\includegraphics[height=3cm,page=2]{halin.pdf}}
  \caption{(a) A good hyperbolic drawing of a seven-node tree, with the dominance regions of two leaves of the tree shown as shaded regions; (b) The Lombardi drawing formed by adding arcs
outside the Poincar\'e model, at $30^\circ$ angles to the boundary, connecting consecutive leaves.}
  \label{fig:halin}
\end{figure}

\begin{lemma}
\label{lem:hypertree}
Every rooted tree has a good hyperbolic drawing.
\end{lemma}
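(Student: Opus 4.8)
The plan is to build the drawing by a top-down recursion on the rooted tree, placing each vertex once its parent has been placed, and maintaining as an invariant that every placed vertex lies in the interior of its dominance region, that the dominance regions of the children of any placed vertex are pairwise disjoint, and that each such child region is contained in the region of its parent. I would start by putting the root $\rho$ at the center of the Poincar\'e disk and drawing its edges to its children as equally spaced radii; the dominance regions of those children are then exactly the angular sectors delimited by the bisectors of consecutive radii, and these partition the disk into pairwise disjoint pieces, each contained in the (trivial) dominance region of $\rho$. Inductively, suppose a non-root vertex $v$ has been placed and that its dominance region $W_v$ is a wedge (or, in the low-degree cases, a half-plane) whose apex is the parent $u$ of $v$, with the edge $uv$ running along the bisector of $W_v$. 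The $\deg(v)$ edge directions at $v$ are now forced: they are equally spaced and one of them is the direction $\sigma$ from $v$ toward $u$. The dominance regions of the children of $v$ are precisely the $\deg(v)-1$ angular sectors at $v$ other than the one straddling $\sigma$, so each of them keeps all of its directions at angular distance at least $\pi/\deg(v)$ from $\sigma$. Provided every one of these child sectors lies inside $W_v$, the nesting requirement for any descendant of $v$ follows by transitivity of containment along the tree path, and the disjointness requirement for two incomparable vertices follows by passing to their lowest common ancestor $a$, whose two relevant child sectors are themselves disjoint and contain the two dominance regions; leaves are the limiting case, placed at the ideal endpoint of the edge from the parent.

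Thus the lemma reduces to a single geometric statement, which I expect to be the main obstacle: if $W$ is a wedge with apex $u$, if $v$ is placed on the bisector of $W$ far enough from $u$, and $\phi_0>0$ is fixed, then every ray leaving $v$ in a direction at angular distance at least $\phi_0$ from the direction toward $u$ stays inside $W$. I would prove this using the hyperbolic angle of parallelism $\Pi(\cdot)$. Write $W=H_1\cap H_2$, where $H_1,H_2$ are the half-planes bounded by the full geodesics extending the two sides of $W$ (both passing through $u$); it suffices to show every such ray stays on the $W$-side of $\partial H_1$ and of $\partial H_2$. Let $f_i$ be the foot of the perpendicular from $v$ to $\partial H_i$. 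In the right triangle $uvf_i$ the angle at $u$ is the half-angle $\beta$ of $W$ and the hypotenuse is $|uv|=t$, so the standard relations give $\sinh|vf_i|=\sin\beta\,\sinh t$ and $\sin\angle(uvf_i)=\sinh|uf_i|/\sinh t$ with $|uf_i|$ bounded (since $\tanh|uf_i|=\cos\beta\,\tanh t\to\cos\beta$); hence as $t\to\infty$ the distance $|vf_i|$ grows without bound while $\angle(uvf_i)\to 0$, i.e.\ the perpendicular direction from $v$ to $\partial H_i$ tends to the direction from $v$ toward $u$. Since a ray from a point at distance $d$ from a geodesic meets that geodesic only if its direction lies within $\Pi(d)$ of the perpendicular direction, and $\Pi(d)\to 0$ as $d\to\infty$, for $t$ large enough the set of directions from $v$ meeting $\partial H_i$ shrinks into an arbitrarily small arc about the direction toward $u$, disjoint from every child sector. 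Taking $\phi_0=\pi/\deg(v)$ gives the claim, so all child sectors of $v$ lie inside $W_v$ and the induction goes through.

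The remaining bookkeeping should be routine. The radial distances can be chosen large enough that no two vertices coincide and only leaves lie on the circle at infinity, since the dominance regions of incomparable vertices are disjoint and those of nested vertices are strictly nested, leaving each vertex strictly in the interior of its own region. The degenerate low-degree cases (a vertex of degree $1$ or $2$, whose relevant sector is a half-plane rather than a proper wedge) are handled by exactly the same argument; the only simplification is that the bounding geodesic of the half-plane passes through $u$, so the perpendicular direction from a far-out $v$ is already the direction toward $u$, which makes the containment easier rather than harder. Finally, a tree whose root has degree $1$ can simply be re-rooted first.
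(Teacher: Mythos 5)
Your proof is correct, and it rests on the same underlying geometric fact as the paper's---that a point far enough out along the bisector ray of a hyperbolic wedge sees the wedge as subtending nearly a full turn, leaving room for the dominance sectors of arbitrarily many children---but you reach that fact by a genuinely different route. The paper inducts on the number of internal nodes: it deletes the children of a deepest internal vertex $v$, draws the smaller tree (in which $v$ is a leaf on the circle at infinity), and then slides $v$ inward along the ray $R$ from its parent, using continuity of the angle $\theta_x$ (which runs from $\pi/d$ at the parent to $\pi$ at infinity) and the intermediate value theorem to find a position where $\theta_x=\pi(1-1/d')$. You instead recurse top-down and prove a quantitative sufficient condition via the angle of parallelism: the distance from $v$ to each bounding geodesic of its wedge grows without bound (from $\sinh|vf_i|=\sin\beta\,\sinh t$) while the perpendicular direction collapses onto the direction toward the parent, so the arc of escaping directions shrinks below $\pi/\deg(v)$. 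The paper's argument is shorter and avoids hyperbolic trigonometry; yours is constructive (it says how far is far enough), makes explicit where hyperbolicity is essential (namely $\Pi(d)\to 0$, which fails in the Euclidean plane), and spells out the nesting and disjointness bookkeeping for incomparable vertices that the paper leaves implicit. Both arguments are valid.
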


\begin{proof}
We use induction on the number of non-leaf nodes in the given tree $T$. As a base case, when there is one non-leaf node, it may be placed at the center of the Poincar\'e disk model of the hyperbolic plane with its leaves at the limit points of equally-spaced rays (radii of the disk model).
Otherwise, let $v$ be a non-leaf that is as far from the root of $T$ as possible, and let $T'$ be formed from $T$ by removing all children of $v$. Then by induction, $T'$ has a good hyperbolic drawing. In this drawing, $v$ is on the circle at infinity; let $R$ be the ray connecting the parent of $v$ to $v$. For any position $x$ along this ray, let $\theta_x$ be the maximum angle made to $R$ by a line that stays within the dominance region of $v$. Then $\theta_x$ varies continuously along $R$, starting from a value of $\pi/d$ at the parent of $v$ (where $d$ is the degree of the parent) and ending with a value of $\pi$ at $v$ itself. If the degree of $v$ in $T$ is $d'$, there must be an intermediate position $x$ on $R$ for which $\theta_x=\pi(1-1/d')$. If we move $v$ to $x$ and place its leaf children at the limit points of equally spaced rays around $x$, the result is a good hyperbolic drawing of $T$.
\end{proof}

\begin{theorem}
Every Halin graph has a planar Lombardi drawing that may be constructed in linear time.
\end{theorem}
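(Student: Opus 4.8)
The plan is to draw the tree $T$ underlying the Halin graph $G$ as a \emph{good hyperbolic drawing} inside the Poincar\'e disk --- this already takes care of the internal vertices --- and then to close up the Halin cycle using circular arcs placed in the exterior of the disk.

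First I would root $T$ at an internal vertex (one exists because $T$ has at least four vertices and no vertex of degree $2$) and apply Lemma~\ref{lem:hypertree} to obtain a good hyperbolic drawing of $T$ in the Poincar\'e model, placing the children of each node in the order prescribed by the plane embedding. In the disk model, hyperbolic segments and rays are Euclidean circular arcs (or diameters), and the model is conformal, so this is a crossing-free circular-arc drawing of $T$ (crossing-freeness coming from the nesting/disjointness of dominance regions) in which the edges around each internal vertex are equally spaced, so every internal vertex already has perfect angular resolution; moreover every leaf lies on the bounding circle $O$ and its single tree edge meets $O$ perpendicularly. Because the children were placed in embedding order, the cyclic order of the leaves around $O$ is exactly the order in which the Halin cycle visits them.

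Next I would add the cycle. Every leaf $u$ has degree $3$ in $G$: its tree edge (which, meeting $O$ perpendicularly, is tangent at $u$ to the inward radius) together with two cycle edges. Perfect angular resolution at $u$ forces the two cycle-edge tangents to make angle $120^\circ$ with the tree tangent and with each other, which is exactly the requirement that each cycle edge leave $u$ at angle $30^\circ$ to $O$ on the exterior side. By Property~\ref{prop:circleAngle}, for each cyclically consecutive pair of leaves $u,u'$ there is a circular arc lying outside $O$ that meets $O$ at angle $30^\circ$, and (again by Property~\ref{prop:circleAngle}) it makes that same angle at both ends; since $30^\circ\neq 0$ it is a genuine arc, not a degenerate pair of collinear rays. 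Drawing all cycle edges this way gives every leaf perfect angular resolution, so the result is a Lombardi drawing of $G$. The running time is linear: the inductive construction of Lemma~\ref{lem:hypertree} touches each vertex once, the condition $\theta_x=\pi(1-1/d')$ fixing a vertex's position along its parent ray has a closed-form solution in the disk model, placing a vertex's leaf children costs time proportional to its degree, and each of the at most $n$ cycle arcs is computed in $O(1)$ time from Property~\ref{prop:circleAngle}.

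It remains to verify planarity, which I expect to be the main obstacle. Crossings between a cycle arc and a tree edge, and incidences of a cycle arc with a non-endpoint vertex, are ruled out immediately: a cycle arc lies in the open exterior of the closed disk except at its two endpoints on $O$, whereas the tree edges and internal vertices lie in the closed disk and the other leaves lie on $O$. The real work is to show the cycle arcs are pairwise noncrossing, and for this I would prove a containment lemma: the arc joining consecutive leaves $u,u'$ stays inside the radial wedge $S_{u,u'}=\{\,tz : t\ge 1,\ z\in\alpha_{u,u'}\,\}$, where $\alpha_{u,u'}$ is the arc of $O$ between $u$ and $u'$ that contains no other leaf. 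This reduces to a direct computation: tracking the angular coordinate of a point of the arc as seen from the center of $O$, and using Property~\ref{prop:circleAngle} together with the fact that the arc stays outside $O$ and meets it at the acute angle $30^\circ$ at both ends, one checks that this angular coordinate never leaves the angular span of $\alpha_{u,u'}$ and that the arc meets each bounding ray of $S_{u,u'}$ only at the corresponding endpoint. Since the arcs $\alpha_{u,u'}$ over all consecutive pairs tile $O$, the wedges $S_{u,u'}$ have pairwise disjoint interiors, and two wedges for cyclically adjacent pairs share only a single bounding ray, which the two arcs meet only at their common leaf. Hence the cycle arcs meet only at shared endpoints and the whole drawing is planar.
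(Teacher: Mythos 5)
Your construction is exactly the paper's: root $T$ at an internal node, build a good hyperbolic drawing via Lemma~\ref{lem:hypertree}, and close the Halin cycle with exterior arcs meeting the bounding circle at $30^\circ$, so that perpendicularity of the leaf edges gives the $120^\circ$ angles at the leaves. The paper's proof is in fact terser than yours --- it does not spell out the planarity of the cycle arcs or the linear-time bookkeeping, both of which you verify correctly --- so your write-up is the same argument carried out in more detail.
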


\begin{proof}
Root the tree $T$ at an arbitrarily chosen non-leaf node, and construct a good hyperbolic drawing of $T$ according to Lemma~\ref{lem:hypertree}. Draw the cycle connecting the leaves of $T$ using circular arcs that meet the circle bounding the Poincar\'e model at angles of $30^\circ$ as in Figure~\ref{fig:halin}(b). Then each non-leaf node of $T$ has perfect angular resolution from the tree drawing, and each leaf node has perfect angular resolution because the ray connecting it to its parent in $T$ is perpendicular to the boundary circle and therefore at $120^\circ$ angles from the two arcs connecting it to adjacent leaves.
\end{proof}


\subsection{Other Classes of Planar Graphs}

The networks formed by two-dimensional soap bubbles naturally form 3-regular planar Lombardi drawings: they have circular arcs as their edges (the boundaries between bubbles), and $120^\circ$ angles at each vertex where three arcs meet~\cite{m-sbr2s-94}. However, we do not have a precise characterization of the graphs that can be formed in this way.

The vertices of every Platonic solid, Archimedean solid, and prism lie on a common sphere. In all but two cases (the snub cube and snub dodecahedron) one may draw the edges of the polyhedron as circular arcs on the sphere with perfect angular resolution. By stereographic projection, each of these graphs has a Lombardi drawing in the plane. For instance, Figure~\ref{fig:triangles}(a) depicts the graph of the octahedron drawn in this way.

All outerplanar and series-parallel graphs have Lombardi drawings (Corollary~
\ref{cor:outerplanar}), but we do not
know whether they all have planar Lombardi drawings.

\section{The Lombardi Spirograph}

We have implemented a program for constructing $k$-circular Lombardi drawings of graphs with dihedral symmetry; we call it the \emph{Lombardi Spirograph}, as its drawings resemble those created by the Spirograph$^{\mathrm{TM}}$ drawing toy produced by Hasbro, Inc. Our program places vertices on $k$ concentric circles; the input specifies not only the number of vertices per circle and the set of edges to be drawn, but also the order in which those edges are incident at each vertex. Each vertex can have at most three neighbors on smaller circles; a circle on which the vertices have two or three inward neighbors has a unique radius for which the vertices have perfect angular resolution, whereas the radius for circles on which the vertices have one inner neighbor is chosen heuristically.

Figures \ref{fig:regular} (a--c), \ref{fig:triangles} (a~\&~b), and~\ref{fig:spirograph} were all drawn using this program.

\begin{figure}[p]
  \centering
  \subfloat[Petersen graph]{\includegraphics[width=3.5cm]{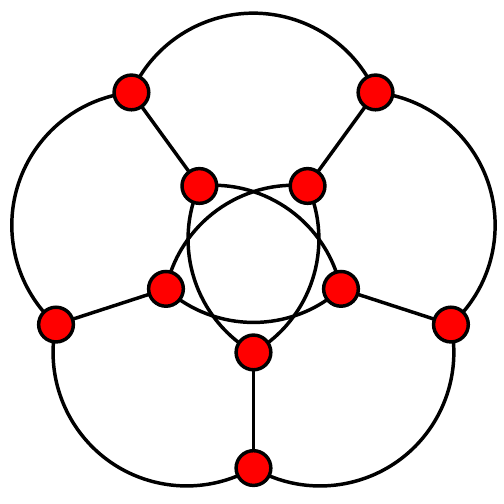}}
  \hfil
  \subfloat[$K_6$]{\includegraphics[width=3.5cm]{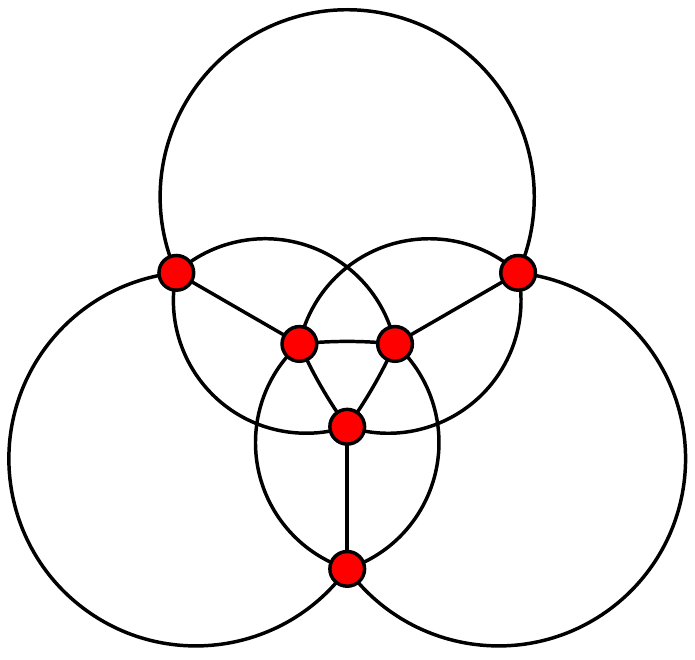}}
  \hfil
  \subfloat[Gr\"otzsch graph]{\includegraphics[width=3.5cm]{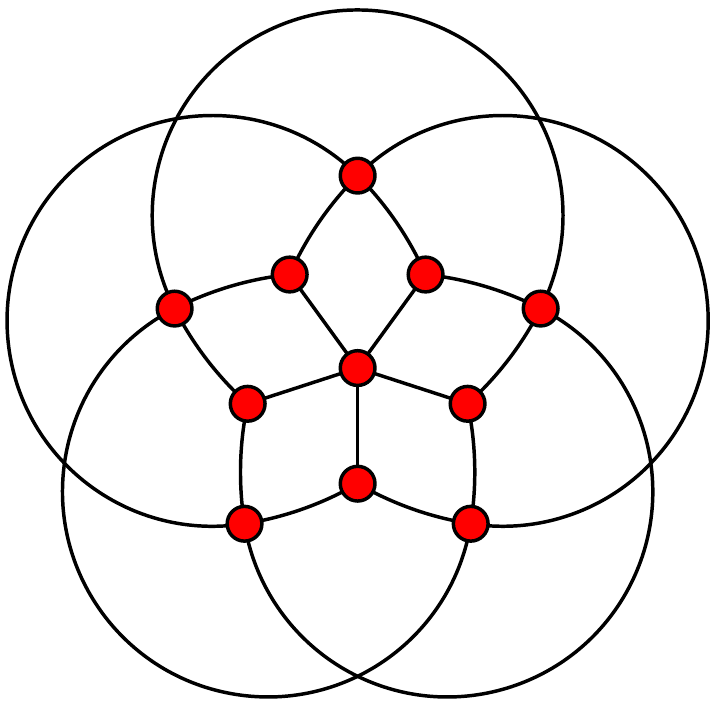}}\\
  \subfloat[Nauru graph $G(12,5)$]{\includegraphics[width=5.75cm]{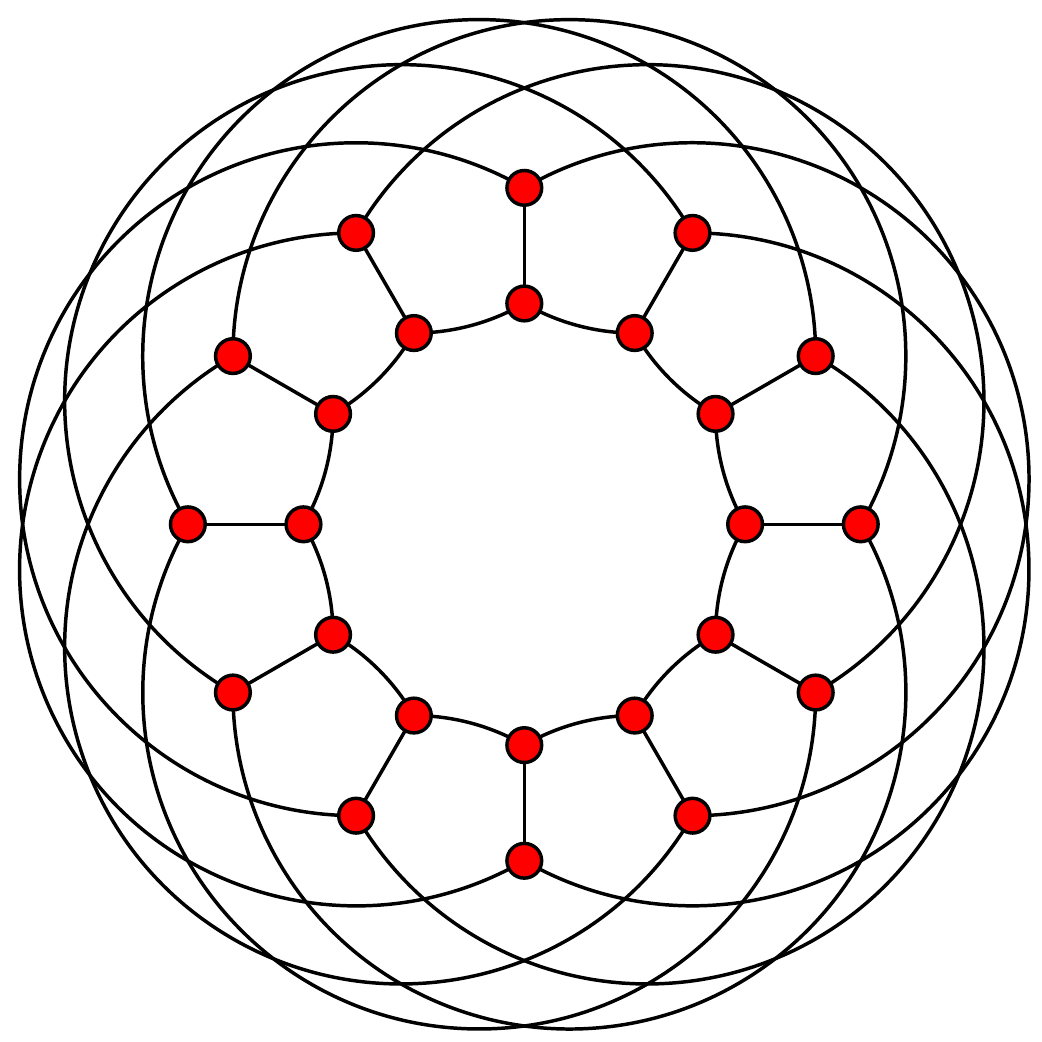}}
  \hfil
  \subfloat[Brinkmann graph]{\includegraphics[width=5.75cm]{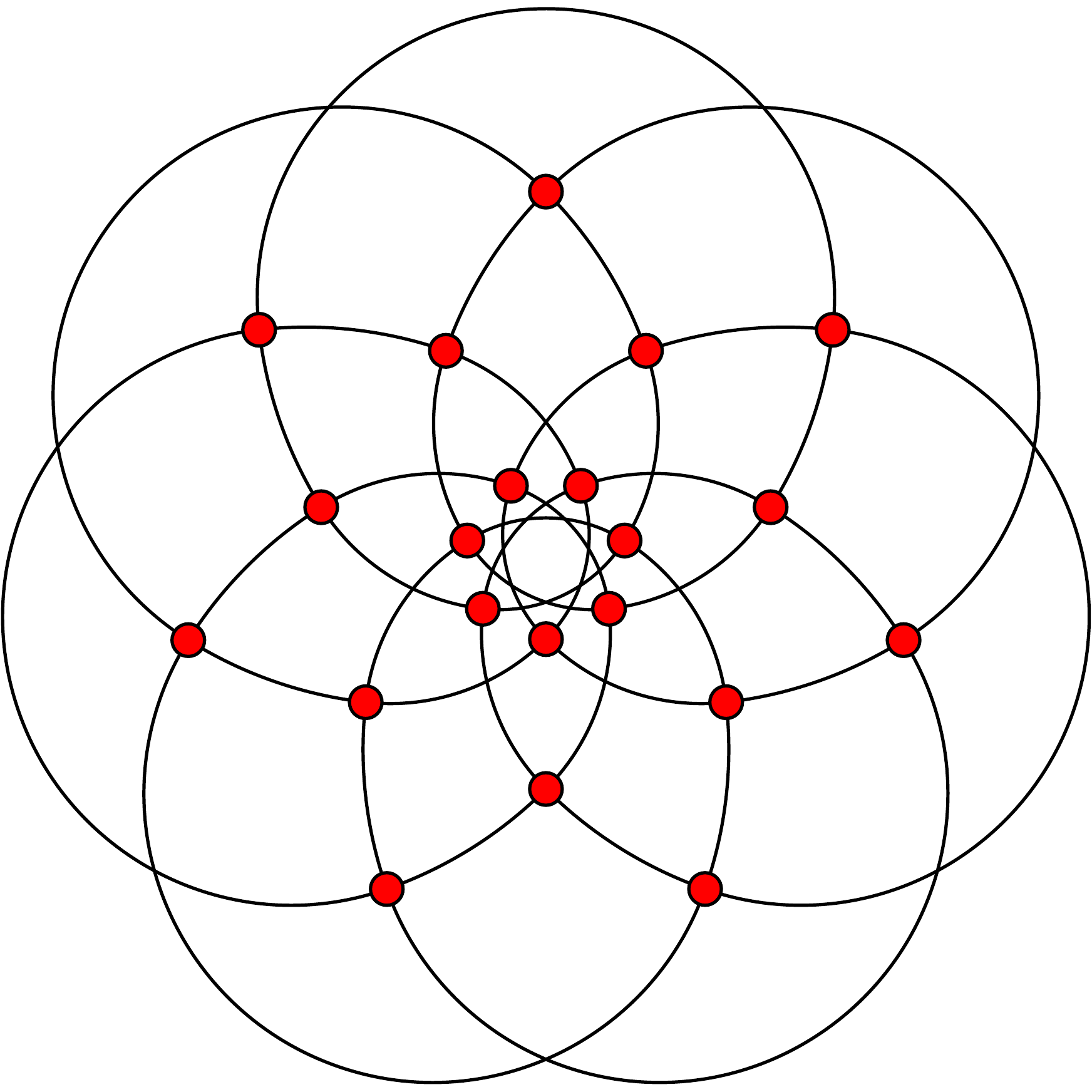}}\\
  \subfloat[Dyck graph]{\includegraphics[width=5.75cm]{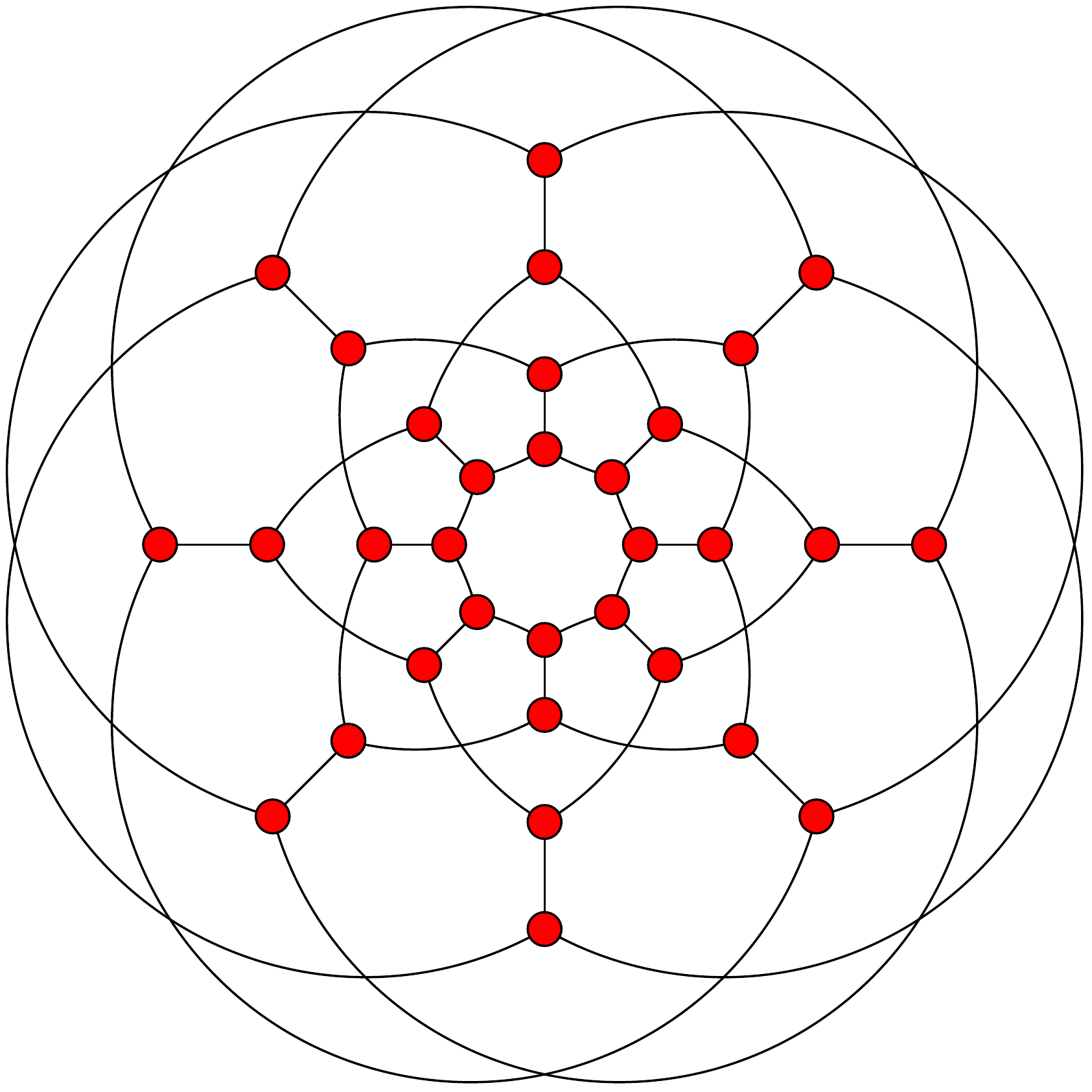}}
  \hfil
  \subfloat[40-vertex cubic symmetric graph $F_{40}$ (the bipartite double cover of the dodecahedron)]{\includegraphics[width=5.75cm]{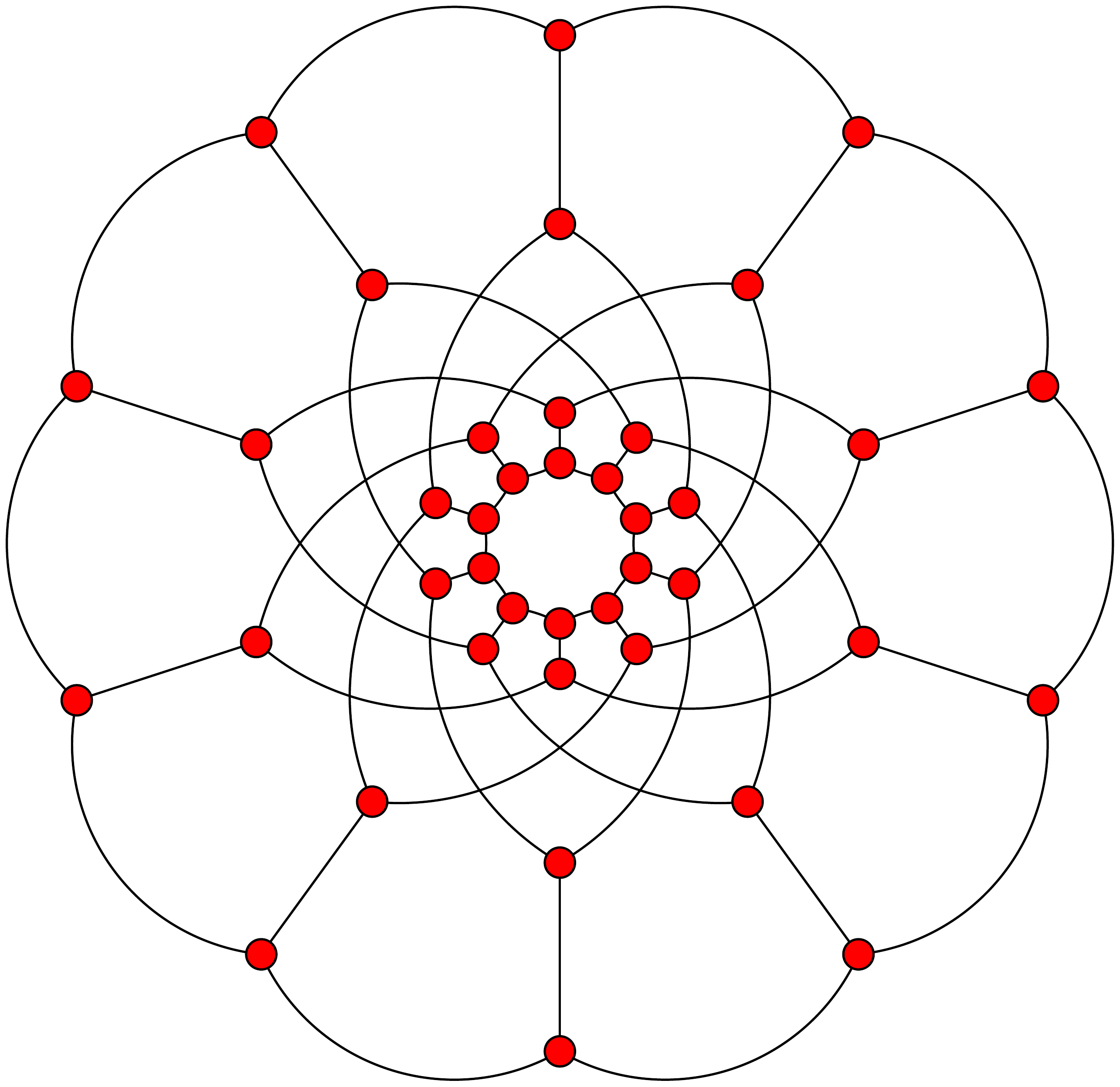}}
  \caption{Sample drawings by the Lombardi Spirograph.}
  \label{fig:spirograph}
\end{figure}

\section{Conclusions}

We have begun an investigation into Lombardi drawings
and found algorithms based on graph matching, incremental construction, 
hyperbolic geometry, 
and symmetry display for constructing drawings of this type. 
Based on our constructions, we can show that many regular graphs, 
sparse graphs, special classes of planar graphs, and symmetric 
graphs have Lombardi drawings, and we 
have found drawings of this type for many well-known graphs.
In addition, we have implemented a method,
called the \emph{Lombardi Spirograph},
for producing Lombardi drawings of graphs with dihedral symmetry.

There are many related problems that remain open, including the
following:
\begin{enumerate}
\ifProceeding
\setlength{\itemsep}{0pt}
\fi
\item What are the complexities of finding circular 
Lombardi drawings for regular graphs with degrees that are 2 mod 4? 
\item Is there an effective classification of 3-degenerate graphs according 
to whether they can or cannot be drawn in a way that avoids 
overlapping features? 
\item
Are there efficient methods for producing planar Lombardi 
drawings for outerplanar graphs, series-parallel graphs, and 
3-regular planar graphs? 
\end{enumerate}
It would also be of interest to combine Lombardi drawing with other 
standard graph drawing quality criteria such as edge-length
minimization. 
In general,
we believe that Lombardi drawings will be 
a fruitful area for much additional research.

\ifProceeding
\vspace{-12pt}
\fi

\subsection*{Acknowledgments}
This research was supported in part by the National Science
Foundation under grant 0830403, by the
Office of Naval Research under MURI grant N00014-08-1-1015, and by the
German Research Foundation under grant NO 899/1-1.

\ifProceeding
\vspace{-12pt}
\fi

{ 
\setlength{\itemsep}{0pt}
\raggedright
\bibliographystyle{abuser}
\bibliography{lombardi}

\begin{thebibliography}{10}

\bibitem{aaadj-at-10}
O.~Aichholzer, W.~Aigner, F.~Aurenhammer, K.~{\v{C}}. Dobi{\'a}{\v{s}}ov{\'a},
  and B.~J{\"u}ttler.
\newblock {Arc triangulations}.
\newblock {\em Proc. 26th Eur. Worksh. Comp. Geometry (EuroCG'10)},
  pp.~17{--}20, 2010.

\bibitem{a-saecbm-03}
N.~Alon.
\newblock {A simple algorithm for edge-coloring bipartite multigraphs}.
\newblock {\em Information Processing Letters} 85(6):301{--}302, 2003,
  \href{http://dx.doi.org/10.1016/S0020-0190(02)00446-5}%
{doi:10.1016/S0020-0190(02)00446-5}.

\bibitem{bb-crcl-05}
M.~Baur and U.~Brandes.
\newblock {Crossing reduction in circular layouts}.
\newblock {\em Proc. 30th Int. Worksh. Graph-Theoretic Concepts in Computer
  Science (WG 2004)}, pp.~332{--}343. Springer-Verlag, LNCS 3353, 2005,
  \url{http://www.springerlink.com/content/fepu2a3hd195ffjg/}.

\bibitem{bbdl-eapmt-01}
T.~C. Biedl, P.~Bose, E.~D. Demaine, and A.~Lubiw.
\newblock {Efficient algorithms for Petersen's matching theorem}.
\newblock {\em J. Algorithms} 38(1):110{--}134, 2001,
  \href{http://dx.doi.org/10.1006/jagm.2000.1132}%
{doi:10.1006/jagm.2000.1132}.

\bibitem{bs-adctd-07}
U.~Brandes and B.~Schlieper.
\newblock {Angle and distance constraints on tree drawings}.
\newblock {\em Proc. 14th Int. Symp. on Graph Drawing (GD 2006)}, pp.~54{--}65.
  Springer-Verlag, LNCS 4372, 2007,
  \href{http://dx.doi.org/10.1007/978-3-540-70904-6\_7}%
{doi:10.1007/978-3-540-70904-6\_7}.

\bibitem{bst-iarvgn-00}
U.~Brandes, G.~Shubina, and R.~Tamassia.
\newblock {Improving angular resolution in visualizations of geographic
  networks}.
\newblock {\em Data Visualization 2000. Proc. 2nd Eurographics/IEEE TCVG Symp.
  Visualization (VisSym '00)}, pp.~23--32. Springer-Verlag, 2000.

\bibitem{DBLP:journals/jgaa/BrandesW00}
U.~Brandes and D.~Wagner.
\newblock Using graph layout to visualize train interconnection data.
\newblock {\em J. Graph Algorithms Appl.} 4(3):135--155, 2000.

\bibitem{cdgk-dpg-01}
C.~C. Cheng, C.~A. Duncan, M.~T. Goodrich, and S.~G. Kobourov.
\newblock {Drawing planar graphs with circular arcs}.
\newblock {\em Discrete Comput. Geom.} 25(3):405{--}418, 2001,
  \href{http://dx.doi.org/10.1007/s004540010080}%
{doi:10.1007/s004540010080}.

\bibitem{cos-ecbm-01}
R.~Cole, K.~Ost, and S.~Schirra.
\newblock {Edge-coloring bipartite multigraphs in $O(E log D)$ time}.
\newblock {\em Combinatorica} 21(1):5{--}12, 2001,
  \href{http://dx.doi.org/10.1007/s004930170002}%
{doi:10.1007/s004930170002}.

\bibitem{dv-aptg-96}
G.~Di~Battista and L.~Vismara.
\newblock {Angles of planar triangular graphs}.
\newblock {\em SIAM J. Discrete Math.} 9(3):349{--}359, 1996,
  \href{http://dx.doi.org/10.1137/S0895480194264010}%
{doi:10.1137/S0895480194264010}.

\bibitem{degkn-dtwpa-10}
C.~A. Duncan, D.~Eppstein, M.~T. Goodrich, S.~G. Kobourov, and
  M.~N\"{o}llenburg.
\newblock Drawing trees with perfect angular resolution and polynomial area.
\newblock {\em Proc. 18th Int. Symp. on Graph Drawing (GD 2010)}.
  Springer-Verlag, to appear.

\bibitem{eek-flc-07}
A.~Efrat, C.~Erten, and S.~G. Kobourov.
\newblock {Fixed-location circular arc drawing of planar graphs}.
\newblock {\em J. Graph Algorithms Appl.} 11(1):145{--}164, 2007,
  \url{http://jgaa.info/accepted/2007/EfratErtenKobourov2007.11.1.pdf}.

\bibitem{ft-cgduf-04}
B.~Finkel and R.~Tamassia.
\newblock Curvilinear graph drawing using the force-directed method.
\newblock {\em Proc. 12th Int. Symp. on Graph Drawing (GD 2004)}, pp.~448-453.
  Springer Berlin / Heidelberg, LNCS 3383, 2005,
  \href{http://dx.doi.org/10.1007/978-3-540-31843-9\_46}%
{doi:10.1007/978-3-540-31843-9\_46}.

\bibitem{g-uepecbm-76}
H.~N. Gabow.
\newblock {Using Euler partitions to edge color bipartite multigraphs}.
\newblock {\em Int. J. Parallel Programming} 5(4):345{--}355, 1976,
  \href{http://dx.doi.org/10.1007/BF00998632}%
{doi:10.1007/BF00998632}.

\bibitem{gk-icl-07}
E.~R. Gansner and Y.~Koren.
\newblock {Improved circular layouts}.
\newblock {\em Proc. 14th Int. Symp. on Graph Drawing (GD 2006)},
  pp.~386{--}398. Springer-Verlag, LNCS 4372, 2007,
  \href{http://dx.doi.org/10.1007/978-3-540-70904-6\_37}%
{doi:10.1007/978-3-540-70904-6\_37}.

\bibitem{gt-pdara-94}
A.~Garg and R.~Tamassia.
\newblock {Planar drawings and angular resolution: algorithms and bounds}.
\newblock {\em Proc. 2nd Eur. Symp. Algorithms}, pp.~12{--}23. Springer-Verlag,
  LNCS 855, 1994, \href{http://dx.doi.org/10.1007/BFb0049393}%
{doi:10.1007/BFb0049393}.

\bibitem{gm-ppdga-98}
C.~Gutwenger and P.~Mutzel.
\newblock {Planar polyline drawings with good angular resolution}.
\newblock {\em Proc. 6th Int. Symp. on Graph Drawing (GD'98)}, pp.~167{--}182.
  Springer-Verlag, LNCS 1547, 1998,
  \href{http://dx.doi.org/10.1007/3-540-37623-2\_13}%
{doi:10.1007/3-540-37623-2\_13}.

\bibitem{h-uszbg-64}
R.~Halin.
\newblock {{\"U}ber simpliziale Zerf{\"a}llungen beliebiger (endlicher oder
  unendlicher) Graphen}.
\newblock {\em Math. Ann.} 156(3):216{--}225, 1964,
  \href{http://dx.doi.org/10.1007/BF01363288}%
{doi:10.1007/BF01363288}.

\bibitem{hlt-pldfda-01}
J.~Holm, K.~de~Lichtenberg, and M.~Thorup.
\newblock {Poly-logarithmic deterministic fully-dynamic algorithms for
  connectivity, minimum spanning tree, 2-edge, and biconnectivity}.
\newblock {\em J. ACM} 48(4):723{--}760, 2001,
  \href{http://dx.doi.org/10.1145/502090.502095}%
{doi:10.1145/502090.502095}.

\bibitem{k-dpguc-96}
G.~Kant.
\newblock {Drawing planar graphs using the canonical ordering}.
\newblock {\em Algorithmica} 16:4{--}32, 1996,
  \href{http://dx.doi.org/10.1007/BF02086606}%
{doi:10.1007/BF02086606}.

\bibitem{k-gem-31}
D.~K{\H{o}}nig.
\newblock {Gr{\'a}fok {\'e}s m{\'a}trixok}.
\newblock {\em Matematikai {\'e}s Fizikai Lapok} 38:116{--}119, 1931.

\bibitem{lw-kdg-70}
D.~R. Lick and A.~T. White.
\newblock {$K$-degenerate graphs}.
\newblock {\em Canad. J. Math.} 22:1082{--}1096, 1970,
  \url{http://www.smc.math.ca/cjm/v22/p1082}.

\bibitem{lh-mlgn-03}
M.~Lombardi and R.~Hobbs.
\newblock {\em {Mark Lombardi: Global Networks}}.
\newblock Independent Curators, 2003.

\bibitem{mp-arpg-94}
S.~Malitz and A.~Papakostas.
\newblock {On the angular resolution of planar graphs}.
\newblock {\em SIAM J. Discrete Math.} 7(2):172{--}183, 1994,
  \href{http://dx.doi.org/10.1137/S0895480193242931}%
{doi:10.1137/S0895480193242931}.

\bibitem{m-sbr2s-94}
F.~Morgan.
\newblock {Soap bubbles in $\mathbb{R}^2$ and in surfaces}.
\newblock {\em Pacific J. Math.} 165(2):347{--}361, 1994,
  \url{http://projecteuclid.org/euclid.pjm/1102621620}.

\bibitem{m-jptrg-92}
H.~M. Mulder.
\newblock {Julius Petersen's theory of regular graphs}.
\newblock {\em Discrete Mathematics} 100(1-3):157{--}175, 1992,
  \href{http://dx.doi.org/10.1016/0012-365X(92)90639-W}%
{doi:10.1016/0012-365X(92)90639-W}.

\bibitem{p-trg-91}
J.~Petersen.
\newblock {Die Theorie der regul{\"a}ren Graphs}.
\newblock {\em Acta Math.} 15(1):193{--}220, 1891,
  \href{http://dx.doi.org/10.1007/BF02392606}%
{doi:10.1007/BF02392606}.

\bibitem{s-becdmt-99}
A.~Schrijver.
\newblock {Bipartite edge coloring in $O(\Delta m)$ time}.
\newblock {\em SIAM J. Comput.} 28(3):841{--}846, 1999,
  \href{http://dx.doi.org/10.1137/S0097539796299266}%
{doi:10.1137/S0097539796299266}.

\bibitem{st-fcdn-99}
J.~M. Six and I.~G. Tollis.
\newblock {A framework for circular drawings of networks}.
\newblock {\em Proc. 7th Int. Symp. on Graph Drawing (GD 1999)},
  pp.~107{--}116. Springer-Verlag, LNCS 1731, 1999,
  \href{http://dx.doi.org/10.1007/3-540-46648-7\_11}%
{doi:10.1007/3-540-46648-7\_11}.

\bibitem{t-nofdgc-00}
M.~Thorup.
\newblock {Near-optimal fully-dynamic graph connectivity}.
\newblock {\em Proc. 32nd ACM Symp. Theory of Computing (STOC '00)},
  pp.~343{--}350, 2000, \href{http://dx.doi.org/10.1145/335305.335345}%
{doi:10.1145/335305.335345}.

\end{thebibliography}
}

\ifUseAppendix
\InAppendixtrue
\DeferToAppendixfalse

\clearpage
\appendix
\section{Additional Proofs and Figures}

\fi

\end{document}